\newtheorem{pr}[theorem]{Problem}
\newtheorem{ex}[theorem]{Example}
\def\@seccntformat#1{\@ifundefined{#1@cntformat}%
   {\csname the#1\endcsname\quad}  
   {\csname #1@cntformat\endcsname}
}
\let\oldappendix\appendix 
\renewcommand\appendix{%
    \oldappendix
    \newcommand{\section@cntformat}{\appendixname~\thesection\quad}
}
\newcommand{\OPT}{\textup{OPT}}
\newcommand{\tw}{\operatorname{tw}}
\newcommand{\set}[1]{\left\{#1\right\}}
\begin{document}

  \title{Matchings with lower quotas:\\ Algorithms and complexity\thanks{A preliminary version of this paper appeared at the 26th International Symposium on Algorithms and Computation (ISAAC 2015). The authors were supported by the Deutsche Telekom Stiftung, COST Action IC1205 on Computational Social Choice, DFG within project A07 of CRC TRR 154, EPSRC grant EP/K010042/1, DAAD with funds of BMBF and the EU Marie Curie Actions. Part of this work was carried out whilst \'A.~Cseh was visiting the University of Glasgow.}}

\author{
Ashwin Arulselvan\inst{1}\and 
\'{A}gnes Cseh\inst{2}
\and 
Martin Gro{\ss}\inst{3}
\and 
David F. Manlove\inst{4}
\and 
Jannik Matuschke\inst{5}
}

\institute{Department of Management Science, University of Strathclyde, UK\\              \email{ashwin.arulselvan@strath.ac.uk}
           \and
           School of Computer Science, Reykjavik University, Iceland\\ 
           \email{cseh@ru.is}
           \and Institute for Mathematics, TU Berlin, Germany\\
              \email{gross@math.tu-berlin.de}
              \and
           School of Computing Science, University of Glasgow, UK \\ 
           \email{David.Manlove@glasgow.ac.uk}
           \and TUM School of Management, Technische Universit\"{a}t M\"{u}nchen, Germany\\
              \email{jannik.matuschke@tum.de}
}
\date{}
\maketitle

\begin{abstract}
We study a natural generalization of the maximum weight many-to-one matching problem. We are given an undirected bipartite graph $G= (A\, \dot\cup\, P, E)$ with weights on the edges in $E$, and with lower and upper quotas on the vertices in~$P$. We seek a maximum weight many-to-one matching satisfying two sets of constraints: vertices in $A$ are incident to at most one matching edge, while vertices in $P$ are either unmatched or they are incident to a number of matching edges between their lower and upper quota. This problem, which we call \emph{maximum weight many-to-one matching with lower and upper quotas} ({\sc wmlq}), has applications to the assignment of students to projects within university courses, where there are constraints on the minimum and maximum numbers of students that must be assigned to each project.

In this paper, we provide a comprehensive analysis of the complexity of {\sc wmlq} from the viewpoints of classical polynomial time algorithms, fixed-parameter tractability, as well as approximability. We draw the line between $\NP$-hard and polynomially tractable instances in terms of degree and quota constraints and provide efficient algorithms to solve the tractable ones. We further show that the problem can be solved in polynomial time for instances with bounded treewidth; however, the corresponding runtime is exponential in the treewidth with the maximum upper quota $u_{\max}$ as basis, and we prove that this dependence is necessary unless $\FPT = \W[1]$. The approximability of {\sc wmlq} is also discussed: we present an approximation algorithm for the general case with performance guarantee $u_{\max}+1$, which is asymptotically best possible unless $\P = \NP$. Finally, we elaborate on how most of our positive results carry over to matchings in arbitrary graphs with lower quotas.

\end{abstract}

\section{Introduction}

Many university courses involve some element of team-based project work. A set of projects is available for a course and each student submits a subset of projects as acceptable. For each acceptable student--project pair $(s,p)$, there is a weight $w(s,p)$ denoting the \emph{utility} of assigning $s$ to~$p$. The question of whether a given project can run is often contingent on the number of students assigned to it. 
Such quota constraints also arise in various other contexts involving the centralized formation of groups, including organizing team-based activities at a leisure center, opening facilities to serve a community and coordinating rides within car-sharing systems. In these and similar applications, the goal is to maximize the utility of the assigned agents under the assumption that the number of participants for each open activity is within the activity's prescribed limits.

We model this problem using a weighted bipartite graph $G= (A\,\dot\cup\,P, E)$, where the vertices in $A$ represent \emph{applicants}, while the vertices in $P$ are \emph{posts} they are applying to. So in the above student--project allocation example, $A$ and $P$ represent the students and projects respectively, and $E$ represents the set of acceptable student--project pairs. The edge weights capture the cardinal utilities of an assigned applicant--post pair. Each post has a lower and an upper quota on the number of applicants to be assigned to it, while each applicant can be assigned to at most one post.  In a feasible assignment, a post is either \emph{open} or \emph{closed}: the number of applicants assigned to an open post must lie between its lower and upper quota, whilst a closed post has no assigned applicant. The objective is to find a maximum weight many-to-one matching satisfying all lower and upper quotas. We denote this problem by {\sc wmlq}.

In this paper, we study the computational complexity of {\sc wmlq} from various perspectives.  We begin by defining the problem formally in \cref{sec:probdef}.  Then in \cref{sec:com_rest}, we show that {\sc wmlq} can be solved efficiently if the degree of every post is at most $2$, whereas the problem becomes hard as soon as posts with degree~$3$ are permitted, even when lower and upper quotas are all equal to the degree, and every applicant has a degree of~$2$. Furthermore, we show the tractability of the case of pair projects, i.e., when all upper quotas are at most~2. In \cref{sec:bounded_tw}, we study the fixed parameter tractability of {\sc wmlq}. To this end, we generalize the known dynamic program for maximum independent set with bounded treewidth to \textsc{wmlq}. The running time of our algorithm is exponential in the treewidth of the graph, with $u_{\max}$, the maximum upper quota of any vertex, as the basis. This yields a fixed-parameter algorithm when parameterized by both the treewidth and $u_{\max}$. We show that this exponential dependence on the treewidth cannot be completely separated from the remaining input by establishing a $W[1]$-hardness result for {\sc wmlq} parameterized by treewidth. Finally, in \cref{se:approx}, we discuss the approximability of the problem. We show that a simple greedy algorithm yields an approximation guarantee of $u_{\max}+1$ for {\sc wmlq} and $\sqrt{|A|}+1$ in the case of unit edge weights. We complement these results by showing that these approximation factors are asymptotically best possible, unless $\P = \NP$. We briefly comment on the generalizability our aforementioned results in \cref{sec:many} for matchings in arbitrary graphs with lower quotas.

\subsection{Related work}\label{sec:rWork}

Among various applications of centralized group formation, perhaps the assignment of medical students to hospitals has received the most attention. In this context, as well as others, the underlying model is a bipartite matching problem involving lower and upper quotas. The \emph{Hospitals / Residents problem with Lower Quotas} ({\sc hrlq})~\cite{BFIM10,HIM14} is a variant of {\sc wmlq} where applicants and posts have ordinal preferences over one another, and we seek a \emph{stable matching} of residents to hospitals. Hamada et al.~\cite{HIM14} considered a version of {\sc hrlq} where hospitals cannot be closed, whereas the model of Bir\'o et al.~\cite{BFIM10} permitted hospital closures. Strategyproof mechanisms have also been studied in instances with ordinal preferences and no hospital closures~\cite{FITUY15,GHIKUYY14,GKHIY15}.

The \emph{Student / Project Allocation problem}~\cite[Section 5.6]{Man13} models the assignment of students to projects offered by lecturers subject to upper and lower quota restrictions on projects and lecturers. Several previous papers have considered the case of ordinal preferences involving students and lecturers~\cite{AIM07,IMY12,MO08} but without allowing lower quotas. However two recent papers~\cite{Kam13,MT13} do permit lower quotas together with project closures, both in the absence of lecturer preferences. Monte and Tumennasan~\cite{MT13} considered the case where each student finds every project acceptable, and showed how to modify the classical ``serial dictatorship'' mechanism to find a Pareto optimal matching. Kamiyama~\cite{Kam13} generalized this mechanism to the case where students need not find all projects acceptable, and where there may be additional restrictions on the sets of students that can be matched to certain projects. This paper also permits lower quotas and project closures, but our focus is on cardinal utilities rather than  ordinal preferences.

The unit-weight version of {\sc wmlq} is also closely related to the \emph{$D$-matching problem}~\cite{Cor88,Lov73,Seb93}, a variant of graph factor problems~\cite{Plu07}.  In an instance of the $D$-matching problem, we are given a graph $G$, and a domain of integers is assigned to each vertex. The goal is to find a subgraph $G'$ of $G$ such that every vertex has a degree in $G'$ that is contained in its domain. Lov\'asz~\cite{Lov72} showed that the problem of deciding whether such a subgraph exists is $\NP$-complete, even if each domain is either $\{1\}$ or~$\{0,3\}$. On the other hand, some cases are tractable.  For example, if for each domain $D$, the complement of $D$ contains no consecutive integers, the problem is polynomially solvable~\cite{Seb93}. 
As observed in~\cite{SS11}, $D$-matchings are closely related to \emph{extended global cardinality constraints} and the authors provided an analysis of the fixed-parameter tractability of a special case of the $D$-matching problem; see~\cref{sec:bounded_tw} for details.

The problem that we study in this paper corresponds to an optimization version of the $D$-matching problem. We consider the special case where $G$ is bipartite and the domain of each applicant vertex is $\{0,1\}$, whilst the domain of each post vertex $p$ is $\{0\}\cup \{\ell(p),\dots,u(p)\}$, where $\ell(p)$ and $u(p)$ denote the lower and upper quotas of $p$ respectively.  Since the empty matching is always feasible in our case, our aim is to find a domain-compatible subgraph $G'$ such that the total weight of the edges in $G'$ is maximum.

\section{Problem definition}
\label{sec:probdef}
In this section we provide a formal definition of the maximum weight many-to-one matching problem with lower quotas (\textsc{wmlq}).

\paragraph{Basic notation}
Let $G = (V, E)$ be a graph.
For a subset of vertices $U \subseteq V$ we denote by $\delta(U) = \{\{ v, w\} \in E : v \in U, w\in V \setminus U\}$ the set of edges incident to exactly one vertex in $U$. For a vertex $v \in V$, we write $\delta(v) = \delta(\{v\})$, and for a subset of edges $F \subseteq E$ we write $\deg_F(v) = |\delta(v) \cap F|$.
By $\Gamma(v) = \{w \in V : \{v, w\} \in E\}$ we denote the \emph{neighborhood} of $v$, i.e., the set of vertices that are adjacent to~$v$.
\medskip

In our problem, a set of applicants $A$ and a set of posts $P$ are given. $A$ and $P$ constitute the two vertex sets of an undirected bipartite graph~$G = (V, E)$ with $V = A\,\dot\cup\, P$ and $E$ represents the set of acceptable applicant-post pairs. 
Each edge carries a \emph{weight} $w: E \rightarrow \mathbb{R}_{\geq 0}$, representing the utility of the corresponding assignment. The set of posts is equipped with functions $\ell: P \rightarrow \mathbb{Z}_{\geq 0}$ and $u: P \rightarrow \mathbb{Z}_{\geq 0}$ such that $\ell(p) \leq u(p)$ for every $p \in P$.  Here $\ell(p)$ is called the \emph{lower quota} of $p$ and $u(p)$ is called the \emph{upper quota} of $p$.  These functions bound the number of admissible applicants for the post (independent of the weight of the corresponding edges). Furthermore, every applicant can be assigned to at most one post.
Thus, an \emph{assignment} is a subset $M \subseteq E$ of the edges such that $|\delta(a) \cap M| \leq 1$ for every applicant $a \in A$ and $|\delta(p) \cap M| \in \left\{ 0, \ell(p), \ell(p)+1, ..., u(p) \right\}$ for every $p \in P$. With respect to an assignment $M$, a post is said to be \emph{open} if the number of applicants assigned to it is greater than~$0$, and \emph{closed} otherwise. The \emph{size} of an assignment $M$, denoted $|M|$, is the number of assigned applicants, while the \emph{weight} of $M$, denoted $w(M)$, is the total weight of the edges in $M$, i.e., $w(M) = \sum_{e \in M} w(e)$. The goal is to find an assignment of maximum weight.
 
\begin{remark}
Note that when not allowing closed posts, the problem immediately becomes tractable. It is easy to see this in the unweighted case as any algorithm for maximum flow with lower capacities can be used to determine an optimal solution in polynomial time. Maximum flow with lower capacities can be easily reduced to the classical maximum flow problem. The method can be naturally extended to the weighted case as the flow-based linear program has integral extreme points due to its total unimodularity property.
\end{remark}

\begin{pr}\textsc{wmlq}\ \\
	Input: $\mathcal{I} = (G, w, \ell, u)$; a bipartite graph $G = (A\,\dot\cup\,P, E)$ with edge weights $w$, lower quotas $\ell$ and upper quotas $u$.\\
Task: Find an assignment of maximum weight.\\
If $w(e)=1$ for all $e \in E$, we refer to the problem as \textsc{mlq}.
\label{pr:wmlq}
\end{pr}
 
 
Some trivial simplification of the instance can be executed right at the start. If $u(p) > |\Gamma(p)|$ for a post $p$, then $u(p)$ can be replaced by $|\Gamma(p)|$. On the other hand, if $\ell(p) > |\Gamma(p)|$, then post $p$ can immediately be deleted, since no feasible solution can satisfy the lower quota condition. Moreover, a post $p$ with $\ell(p) = 1$ behaves identically to the case that $\ell(p) = 0$, so we assume that no post $p$ has $\ell(p) = 1$. From now on we assume that the instances have already been simplified this way.

\section{Degree- and quota-restricted cases}\label{sec:com_rest}
In this section we characterize the complexity of {\sc wmlq} in the presence of upper bounds placed on vertex degrees or the posts' upper quotas.  \cref{sec:degreerest} deals with degree-restricted cases, whilst \cref{sec:quotarest} studies cases involving bounded upper quotas.

\subsection{Degree-restricted cases}
\label{sec:degreerest}
In this subsection we will consider \textsc{wmlq}$(i,j)$, the special case of \textsc{wmlq} in which $|\Gamma(a)|\leq i$ for all $a\in A$, and $|\Gamma(p)|\leq j$ for all $p\in P$. That is, every applicant submits at most $i$ applications and every post receives at most $j$ applications. In order to establish our first result, we reduce the maximum independent set problem (\textsc{mis}) to \textsc{mlq}. In \textsc{mis}, a graph with $n$ vertices and $m$ edges is given and the task is to find an independent vertex set of maximum size. 
\textsc{mis} is not approximable within a factor of~$n^{1-\varepsilon}$ for any~$\varepsilon > 0$, unless $\P = \NP$~\cite{Zuc07}. The problem remains $\APX$-complete even for cubic (3-regular) graphs~\cite{AK00}.

\begin{theorem}
	\label{th:max_spa_np}
	\textsc{mlq(2,3)} is $\APX$-complete.
\end{theorem}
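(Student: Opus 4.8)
The plan is to establish membership in $\APX$ by a cheap argument and then prove $\APX$-hardness via an approximation-preserving (L-)reduction from \textsc{mis} restricted to cubic graphs, which is $\APX$-complete by~\cite{AK00}. For membership, I would note that after the simplification described above every post $p$ of a \textsc{mlq(2,3)} instance satisfies $u(p) \le |\Gamma(p)| \le 3$, so $u_{\max} \le 3$; hence the $(u_{\max}+1)$-approximation algorithm of \cref{se:approx} is in particular a $4$-approximation for \textsc{mlq(2,3)}, and therefore \textsc{mlq(2,3)} $\in \APX$.

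For hardness I would use the following construction. Given a cubic graph $G=(V,E)$, create an instance $\mathcal{I}$ of \textsc{mlq} with a post $p_v$ for every $v\in V$ and an applicant $a_e$ for every edge $e=\{u,v\}\in E$, where $a_e$ applies to exactly $p_u$ and $p_v$, all edge weights are $1$, and $\ell(p_v)=u(p_v)=3$ for every $v$. Each applicant then has degree exactly $2$, and since $G$ is $3$-regular each post has degree exactly $3$, so $\mathcal{I}$ is a valid \textsc{mlq(2,3)} instance (with quotas equal to the post degrees, as in the refined statement). The key structural observation is that, because $\ell(p_v)=u(p_v)=3=\deg(p_v)$, in any feasible assignment $M$ a post is either closed or receives all three of its incident applicants; consequently, if $p_u$ and $p_v$ were both open for some $e=\{u,v\}\in E$ then $a_e$ would be assigned to both, a contradiction. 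Thus $I_M:=\{v : p_v \text{ open in } M\}$ is independent in $G$, and since distinct open posts have disjoint applicant sets we get $|M|=3|I_M|$. Conversely, for any independent set $I$ of $G$, opening exactly the posts $\{p_v : v\in I\}$ and assigning $a_e$ to $p_v$ precisely when $v\in e\cap I$ yields a feasible assignment of size $3|I|$ (applicant feasibility holds because $|e\cap I|\le 1$).

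From this it follows that $\OPT(\mathcal{I})=3\cdot\OPT(G)$ and, more importantly, that from any feasible assignment $M$ one recovers in polynomial time an independent set $I_M$ with $|\OPT(G)-|I_M||=\tfrac13\,|\OPT(\mathcal{I})-|M||$; this is an L-reduction with constants $\alpha=3$ and $\beta=\tfrac13$, so the $\APX$-hardness of \textsc{mis} on cubic graphs carries over to \textsc{mlq(2,3)}, which together with the membership argument gives $\APX$-completeness. The one point that needs care — and the reason the reduction is so clean — is the rigidity imposed by setting $\ell(p_v)=u(p_v)=\deg(p_v)$: it collapses each post to a binary open/closed decision and forces every open post to consume all of its applicants, which is exactly what turns the set of open posts into an independent set and makes the objective scale by the fixed factor $3$ for \emph{every} solution rather than only the optimum. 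Checking that value correspondence for arbitrary (not just optimal) assignments is what upgrades the construction from a plain $\NP$-hardness reduction to the L-reduction needed for $\APX$-hardness.
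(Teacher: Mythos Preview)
Your proposal is correct and follows essentially the same approach as the paper: the same reduction from \textsc{mis} on cubic graphs (vertices $\to$ posts with $\ell=u=3$, edges $\to$ degree-$2$ applicants), the same rigidity observation that forces open posts to form an independent set, and the same appeal to the $(u_{\max}+1)$-approximation for membership in $\APX$. The only cosmetic difference is that the paper records the L-reduction constants as $\alpha=\beta=3$, whereas you give the tighter $\beta=\tfrac13$; both are valid, yours is simply sharper.
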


\begin{proof}
	First of all, \textsc{mlq(2,3)} is in $\APX$ because 
    the problem has a $4$-approx\-ima\-tion that can be found in polynomial time (see Theorem~\ref{thm:greedy-approximation}).
	
	To each instance $\mathcal{I}$ of \textsc{mis} on cubic graphs we create an instance $\mathcal{I}'$ of \textsc{mlq} such that there is an independent vertex set of size at least $K$ in $\mathcal{I}$ if and only if $\mathcal{I}'$ admits an assignment of size at least~$3K$, yielding an approximation-preserving reduction.  The construction is as follows. To each of the $n$ vertices of graph $G$ in~$\mathcal{I}$, a post with upper and lower quota of~3 is created. The $m$ edges of $G$ are represented as $m$ applicants in~$\mathcal{I}'$. For each applicant $a \in A$, $|\Gamma(a)| =2$ and $\Gamma(a)$ comprises the two posts representing the two end vertices of the corresponding edge. Since we work on cubic graphs, $|\Gamma(p)| = 3$ for every post~$p \in P$.
	
	First we show that an independent vertex set of size $K$ can be transformed into an assignment of at least $3K$ applicants. All we need to do is to open a post with its entire neighborhood assigned to it if and only if the vertex representing that post is in the independent set. Since no two posts stand for adjacent vertices in~$G$, their neighborhoods do not intersect. Moreover, the assignment assigns exactly three applicants to each of the $K$ open posts.
	
	To establish the opposite direction, let us assume that an assignment of cardinality at least $3K$ is given. The posts' upper and lower quota are both set to~3, therefore, the assignment involves at least $K$ open posts. No two of them can represent adjacent vertices in $G$, because then the applicant standing for the edge connecting them would be assigned to both posts at the same time.
    
		Note that every solution of the constructed instance of \textsc{mlq} serves an integer multiple of 3 applicants. In particular, the \textsc{mlq} instance has a solution serving $3K$ applicants if and only if there is an independent set of size $K$ in the \textsc{mis} instance. Hence, this reduction  preserves the approximation factors. Since \textsc{mlq(2,3)} belongs to $\APX$ and \textsc{mis} is $\APX$-complete in cubic graphs, it follows that \textsc{mlq(2,3)} is $\APX$-complete.
\qed
\end{proof}

So far we have established that if $|\Gamma(a)| \leq 2$ for every applicant $a \in A$ and $|\Gamma(p)| \leq 3$ for every post $p \in P$, then \textsc{mlq} is $\NP$-hard. In the following, we also show that these restrictions are the tightest possible. If $|\Gamma(p)| \leq 2$ for every post $p \in P$, then a maximum weight matching can be found efficiently, regardless of~$|\Gamma(a)|$. Note that the case \textsc{wmlq(1,$\infty$)} is trivially solvable.

\begin{theorem}
\label{th:infty_2}
	\textsc{wmlq($\infty$,2)} is solvable in $O(n^2 \log n)$ time, where $n = |A| + |P|$.
\end{theorem}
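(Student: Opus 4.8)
The plan is to reduce \textsc{wmlq}$(\infty,2)$ to a single maximum weight matching computation in a general (non-bipartite) graph of size $O(n)$. After the preprocessing described above, every post $p$ has $|\Gamma(p)|\le 2$ and hence $u(p)\le 2$, and posts of degree $0$ are irrelevant and may be removed. I would split the posts into three types according to their effective constraints, recalling that $\ell(p)=1$ is equivalent to $\ell(p)=0$: \emph{type~I}, posts with $u(p)=1$ (in particular all posts of degree at most $1$); \emph{type~II}, degree-two posts with $\ell(p)\le 1$ and $u(p)=2$, whose two edges may be used independently; and \emph{type~III}, degree-two posts with $\ell(p)=u(p)=2$, whose two edges must be used simultaneously or not at all. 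This list is exhaustive since $\ell(p)\le u(p)\le 2$.

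Next I would construct an auxiliary graph $H$. Its vertices are the applicants in $A$, together with one vertex for each type-I post and two vertices $p',p''$ for each type-II post. For a type-I post $p$, add edges $\{a,p\}$ of weight $w(a,p)$ for every $a\in\Gamma(p)$. For a type-II post $p$ with $\Gamma(p)=\{a_1,a_2\}$, add edges $\{a_1,p'\}$ and $\{a_2,p''\}$ of weights $w(a_1,p)$ and $w(a_2,p)$. For a type-III post $p$ with $\Gamma(p)=\{a_1,a_2\}$, add a single edge $\{a_1,a_2\}$ of weight $w(a_1,p)+w(a_2,p)$; among parallel edges arising this way we keep only a heaviest one, so that $H$ is simple. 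Then $|V(H)|\le|A|+2|P|$ and $|E(H)|\le 2|P|$ are both $O(n)$, and $H$ is in general not bipartite, precisely because type-III gadgets place edges inside $A$.

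The core of the proof is then to show that the maximum weight of an assignment for $\mathcal{I}$ equals the maximum weight of a matching in $H$, together with a linear-time translation in both directions. Given a feasible assignment $M$, I would select the following edges of $H$: $\{a,p\}$ for every open type-I post $p$ with assigned applicant $a$; $\{a_i,p'\}$ or $\{a_i,p''\}$ exactly for the endpoints of a type-II post $p$ that $M$ assigns to $p$; and $\{a_1,a_2\}$ for every open type-III post $p$. Feasibility of $M$ --- no applicant assigned more than once, and every post receiving an admissible number of applicants --- guarantees that these edges form a matching of weight $w(M)$. Conversely, a matching $N$ of $H$ is decoded into an assignment by reversing these rules; I would check that each applicant then receives at most one post, each type-I post receives at most one applicant (admissible because $\ell(p)\le 1\le u(p)$), each type-II post receives between $0$ and $2$ applicants, and each type-III post receives either $0$ or exactly $2$ applicants, so that the decoded assignment is feasible and has weight $w(N)$. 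Finally I would invoke Gabow's algorithm for maximum weight matching in general graphs~\cite{Gabow90}, whose running time on $H$ is $O\bigl(|V(H)|\,(|E(H)|+|V(H)|\log|V(H)|)\bigr)=O(n^2\log n)$, the two translations being linear time.

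I expect the only genuine subtlety to be the case analysis on post types and the verification that it is exhaustive and faithful: a post with upper quota $1$ is just an ordinary matching vertex and is kept as such; an unconstrained degree-two post really does decouple into two independent one-sided options, so it may be split into $p'$ and $p''$; but a degree-two post with lower quota $2$ forces its two edges to be used together, which is captured exactly by a single edge joining its two applicants. It is this last gadget that both makes the reduction correct and forces the use of general rather than bipartite matching. \qed
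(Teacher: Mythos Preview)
Your proposal is correct and follows essentially the same route as the paper: split degree-two posts with $\ell(p)\le 1$ and $u(p)=2$ into two independent upper-quota-$1$ copies, replace each post with $\ell(p)=u(p)=2$ by a single weighted edge between its two applicants, and solve a maximum weight matching on the resulting general graph via Gabow's $O(n(m+n\log n))$ algorithm. The paper's write-up is terser but the construction and the correctness argument are the same; your explicit handling of parallel type-III edges and the edge/vertex count are welcome additions.
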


\begin{proof}
	After executing the simplification steps described at the end of \cref{sec:probdef}, we apply two more changes to derive our helper graph~$H$. Firstly, if $\ell(p) = 0$, $u(p) = 2$ and $|\Gamma(p)| = 2$, we separate $p$'s two edges, splitting $p$ into two posts with upper quota~1. After this step, all posts with $u(p) = 2$ also have $\ell(p) = 2$. All remaining vertices are of upper quota~1. Then, we substitute all edge pairs of posts with $\ell(p) = u(p) = 2$ with a single edge connecting the two applicants. This edge will carry the weight equal to the sum of the weights of the two deleted edges.
    
    Clearly, any matching in $H$ translates into an assignment of the same weight in $G$ and vice versa. Finding a maximum weight matching in a general graph with $n$ vertices and $m$ edges can be done in $O(n(m + n \log n))$ time~\cite{Gab90}, which reduces to $O(n^2 \log n)$ in our case. \qed
%
%
\end{proof}

\subsection{Quota-restricted cases}
\label{sec:quotarest}
In this section, we consider restrictions of \textsc{wmlq} with bounded upper quotas. Note that Theorem~\ref{th:max_spa_np} already tells us that the case of $u(p) \leq 3$ for all posts $p \in P$ is $\NP$-hard to solve. We will now settle the complexity of the only remaining case, where we have instances with every post $p\in P$ having an arbitrary degree and $u(p) \le 2$. This setting models posts that need to be assigned  to none, one or pairs of applicants.


Here we present a solution for \textsc{wmlq} with $u(p) \leq 2$. Our algorithm is based on $f$-factors of graphs. In the {\em $f$-factor problem}, a graph $G$ and a function $f: V \rightarrow \mathbb{Z}_{\geq 0}$ is given. A set of edges $F \subseteq E$ is called an \emph{$f$-factor} if $\deg_F(v) = f(v)$ for every $v \in V$, where $\deg_F(v)$, as per our earlier definition, is the degree of $v$ in the graph~$(V,F)$. Constructing an $f$-factor of maximum weight in a graph with $n$ vertices and $m$ edges or proving that none exists can be done in $O(\phi(m + n \log{n}))$ time, where $\phi$ is the sum of all $f$-values in the graph~\cite{Gab83,Gab90}.

\begin{restatable}{theorem}{restatepairs}
\label{th:u2}
	\textsc{wmlq} with $u(p) \leq 2$ for every $p \in P$ can be solved in $O(nm + n^2 \log{ n})$ time, where $n = |V|$ and $m = |E|$.
\end{restatable}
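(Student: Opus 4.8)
The plan is to reduce \textsc{wmlq} with $u(p)\le 2$ to a maximum-weight $f$-factor computation --- in the degree-constrained sense, where the degree of each vertex is required to lie in a prescribed set --- on an auxiliary graph $H$ that has $O(n)$ vertices, $O(m)$ edges, and whose prescribed degree sets have upper ends summing to $O(n)$; the algorithm of Gabow~\cite{Gab83,Gab90} then finishes the job. By the simplifications following the problem definition, every post is of one of three types: (i)~$u(p)=1$; (ii)~$\ell(p)=0$ and $u(p)=2$; (iii)~$\ell(p)=u(p)=2$.

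First I would build $H$. It keeps all applicants, all posts, and all original edges with their weights, and, for every type-(iii) post $p$, it additionally has two new vertices $x_p,y_p$ together with the three weight-$0$ edges $\{p,x_p\}$, $\{p,y_p\}$, $\{x_p,y_p\}$. The prescribed degree sets are $\{0,1\}$ at every applicant and every type-(i) post, $\{0,1,2\}$ at every type-(ii) post, $\{2\}$ at every type-(iii) post, and $\{1\}$ at every $x_p$ and $y_p$. For the first three classes of vertices all incident edges of $H$ are original edges, so these (interval) constraints directly encode the applicant capacities and the post upper quotas. The triangle $\{p,x_p,y_p\}$ is there to express the one constraint that is not an interval, namely $\{0,2\}$ at a type-(iii) post: a one-line case distinction shows that in any feasible subgraph $F$, either $\{x_p,y_p\}\in F$, which forces $\{p,x_p\},\{p,y_p\}\notin F$, so that $p$ meets its requirement of $2$ using only original edges, or $\{x_p,y_p\}\notin F$, which forces $\{p,x_p\},\{p,y_p\}\in F$, so that $p$ is incident to no original edge of $F$. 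Hence $p$ is incident to exactly $0$ or $2$ original edges of $F$.

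Next I would verify the correspondence, which is then routine. An assignment $M$ of $\mathcal I$ extends to a feasible subgraph of $H$ of weight $w(M)$: just add $\{x_p,y_p\}$ for every open type-(iii) post and $\{p,x_p\},\{p,y_p\}$ for every closed one. Conversely, for any feasible subgraph $F$ of $H$, the edge set $F\cap E$ is an assignment of weight $w(F)$, since the degree sets enforce the applicant capacities and the type-(i)/(ii) quotas, the triangle analysis above enforces the type-(iii) quotas, and all added edges have weight $0$. Thus a maximum-weight feasible subgraph of $H$ projects onto a maximum-weight assignment of $\mathcal I$, and vice versa.

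Finally, writing $q$ for the number of type-(iii) posts (so $q\le|P|$), the graph $H$ has $|A|+|P|+2q=O(n)$ vertices and $|E|+3q=O(m)$ edges, and the upper ends of its degree sets sum to at most $|A|+2|P|+2q=O(n)$ precisely because every quota is at most $2$. Substituting these bounds into $O(\phi(m+n\log n))$ yields $O(nm+n^2\log n)$. The step I expect to need the most care is the design of this type-(iii) gadget: it has to realise the non-convex target set $\{0,2\}$ using only $O(1)$ extra vertices, edges and degree budget per post. A more naive device --- for instance, replacing $p$ by a clique on $\Gamma(p)$ in which the edge $\{a,a'\}$ carries weight $w(a,p)+w(a',p)$ and then taking a matching --- would also be correct, but it would blow $\phi$ up to $\Theta(m)$ and destroy the claimed running time.
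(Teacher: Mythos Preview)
Your proposal is correct and arrives at the same running time, but it takes a cleaner route than the paper. Both constructions use the identical triangle gadget at each type-(iii) post to encode the non-interval target set $\{0,2\}$. The difference lies in how the remaining, interval-shaped constraints are handled. The paper insists on an \emph{exact} $f$-factor formulation ($\deg_F(v)=f(v)$ for all~$v$), which forces it to introduce a global dummy post $p_d$ adjacent to every applicant (to absorb the unmatched ones) and a path gadget through $p_d$ for every post with $u(p)=1$; since the correct value of $f(p_d)$ is not known in advance, the paper then either iterates over all $O(n)$ candidate values of $f(p_d)$ or, with a further parity trick using triangles hung on $p_d$, reduces this to two $f$-factor computations. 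You instead invoke the more general interval-degree-constrained subgraph problem (a $(g,f)$-factor), which lets applicants and type-(i)/(ii) posts sit with their natural constraints $[0,1]$ and $[0,2]$ and dispenses with $p_d$ entirely; a single call to Gabow's algorithm suffices. Your version is shorter and conceptually tidier; the paper's version has the minor advantage of relying only on the exact-$f$-factor black box it explicitly states, whereas your argument tacitly assumes the same $O(\phi(m+n\log n))$ bound extends to the $(g,f)$-factor variant --- true via the reduction in~\cite{Gab83}, but worth making explicit.
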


\begin{proof}

We partition $P$ into $P_1$ and $P \setminus P_1$, where $P_1$ denotes the set of posts with $u(p) = 1$. For posts in $P \setminus P_1$ we can assume that $\ell(p) = u(p) = 2$ for every post~$p$. For, a post $p$ with $\ell(p)=0$ and $u(p) = 2$ can be transformed into a post with $\ell(p) = u(p) = 2$ by giving it two dummy edges with zero weight, allowing us to pick the dummy edges in order to make up for the raised lower quota. 

The graph $G' =(V',E')$ of the constructed $f$-factor instance contains the graph $G =(V,E)$ of our \textsc{wmlq} instance, as shown in \cref{fi:ffactor}. We add a dummy post $p_d$ to $V'$ and connect it to every applicant in~$A$. We connect every post $p_i\in P_1$ to~$p_d$. For every post $p_i\in P\backslash P_1$ we add two dummy vertices $q_i^1$ and $q_i^2$ and a triangle on the vertices $p_i, q_i^1$ and~$q_i^2$. All new edges in $E' \setminus E$ carry zero weight. 
 
We set $f(p_d) = K$, $f(p) = u(p)$ for every $p \in P$ and $f(v) = 1$ for the rest of the vertices. In the initial version of our algorithm, we solve a weighted $f$-factor problem for every $K \in \{0, 1, ..., |A| + |P_1|\}$, and later we will show a slightly modified version of the $f$-factor instance so that it is sufficient to construct only two instances.
    
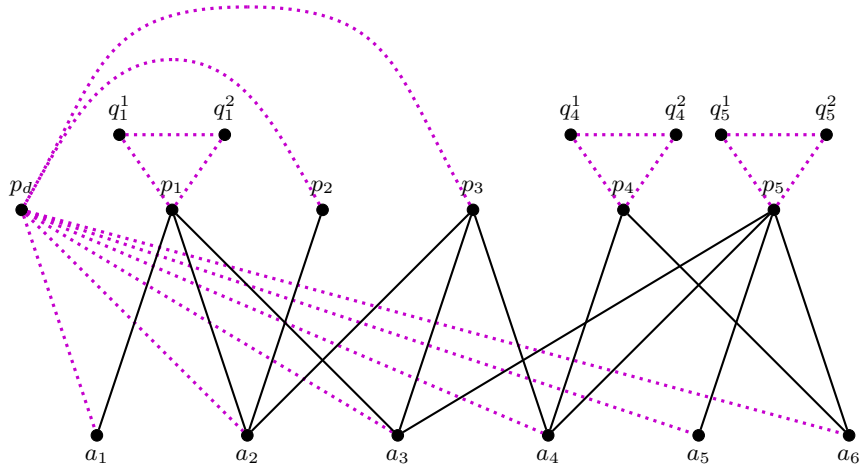
\begin{figure}[ht]
\centering
\begin{tikzpicture}[scale=1, transform shape]

\tikzstyle{vertex} = [circle, draw=black, fill=black, scale= 0.5]
\tikzstyle{edgelabel} = [rectangle, fill=white]
\definecolor{MyPurple}{RGB}{197,0,205}
\pgfmathsetmacro{\d}{0.7}
\pgfmathsetmacro{\b}{3}


\node[vertex, label=above:$p_d$] (m_d) at (-2, 3) {};
\node[vertex, label=above:$p_1$] (m_1) at (0, 3) {};
\node[vertex, label=above:$p_2$] (m_2) at (2, 3) {};
\node[vertex, label=above:$p_3$] (m_3) at (4, 3) {};
\node[vertex, label=above:$p_4$] (m_4) at (6, 3) {};
\node[vertex, label=above:$p_5$] (m_5) at (8, 3) {};

\node[vertex, label=below:$a_1$] (w_1) at (-1, 0) {};
\node[vertex, label=below:$a_2$] (w_2) at (1, 0) {};
\node[vertex, label=below:$a_3$] (w_3) at (3, 0) {};
\node[vertex, label=below:$a_4$] (w_4) at (5, 0) {};
\node[vertex, label=below:$a_5$] (w_5) at (7, 0) {};
\node[vertex, label=below:$a_6$] (w_6) at (9, 0) {};

\node[vertex, label=above:$q_1^1$] (q_11) at ($(m_1) + (-\d, 1)$) {};
\node[vertex, label=above:$q_1^2$] (q_12) at ($(m_1) + (\d, 1)$) {};

\node[vertex, label=above:$q_4^1$] (q_41) at ($(m_4) + (-\d, 1)$) {};
\node[vertex, label=above:$q_4^2$] (q_42) at ($(m_4) + (\d, 1)$) {};

\node[vertex, label=above:$q_5^1$] (q_51) at ($(m_5) + (-\d, 1)$) {};
\node[vertex, label=above:$q_5^2$] (q_52) at ($(m_5) + (\d, 1)$) {};

\node[vertex, shape=coordinate] (q_21) at ($(m_2) + (-2, 2)$) {};
\node[vertex, shape=coordinate] (q_31) at ($(m_3) + (-3, 2.7)$) {};

\draw [very thick, dotted, MyPurple] (m_d) --  (w_1);
\draw [very thick, dotted, MyPurple] (m_d) --  (w_2);
\draw [very thick, dotted, MyPurple] (m_d) --  (w_3);
\draw [very thick, dotted, MyPurple] (m_d) --  (w_4);
\draw [very thick, dotted, MyPurple] (m_d) --  (w_5);
\draw [very thick, dotted, MyPurple] (m_d) --  (w_6);

\draw [very thick, dotted, MyPurple] (q_41) --  (q_42);
\draw [very thick, dotted, MyPurple] (m_4) --  (q_41);
\draw [very thick, dotted, MyPurple] (m_4) --  (q_42);

\draw [very thick, dotted, MyPurple] (q_11) --  (q_12);
\draw [very thick, dotted, MyPurple] (m_1) --  (q_11);
\draw [very thick, dotted, MyPurple] (m_1) --  (q_12);

\draw [very thick, dotted, MyPurple] (q_51) --  (q_52);
\draw [very thick, dotted, MyPurple] (m_5) --  (q_51);
\draw [very thick, dotted, MyPurple] (m_5) --  (q_52);

\draw [thick] (m_1) -- (w_1);
\draw [thick] (m_1) -- (w_2);
\draw [thick] (m_1) -- (w_3);
\draw [thick] (m_2) -- (w_2);
\draw [thick] (m_3) -- (w_3);
\draw [thick] (m_3) -- (w_2);
\draw [thick] (m_4) -- (w_4);
\draw [thick] (m_4) -- (w_6);
\draw [thick] (m_5) -- (w_4);
\draw [thick] (m_5) -- (w_3);
\draw [thick] (m_3) -- (w_4);
\draw [thick] (m_5) -- (w_5);
\draw [thick] (m_5) -- (w_6);

\draw [very thick, dotted, MyPurple] (m_2) to[out=120,in=0, distance=1cm ] (q_21);
\draw [very thick, dotted, MyPurple] (q_21) to[out=180,in=60, distance=1cm ] (m_d);
\draw [very thick, dotted, MyPurple] (m_3) to[out=120,in=0, distance=2cm ] (q_31);
\draw [very thick, dotted, MyPurple] (q_31) to[out=180,in=60, distance=2cm ] (m_d);

\end{tikzpicture}
\caption{The transformation from \textsc{wmlq} to an $f$-factor problem. The solid edges form $G$, while the dotted edges are the added ones, carrying weight~0.  Here, $P_1=\{p_2,p_3\}$ and $P\backslash P_1=\{p_1,p_4,p_5\}$.}
\label{fi:ffactor}
\end{figure}

First we show that if there is a feasible assignment $M$ in $G$ so that the number of unmatched applicants and the number of closed posts in $P_1$ add up to $K$, then it can be extended to an $f$-factor $M'$ of the same weight in~$G'$. We construct $M'$ starting with $M$ and then adding the following edges to it:

\begin{itemize}
 \item $\set{p_d,a_i}$ for every applicant $a_i$ that is unmatched in~$M$;
 \item $\set{q_i^1,p_i}$ and $\set{q_i^2,p_i}$ for every post $p_i\in P\backslash P_1$ that is closed in~$M$;
 \item $\set{q_i^1,q_i^2}$ for every post $p_i\in P\backslash P_1$ that is open in~$M$;
 \item $\set{p_d,p_i}$ for every post $p_i \in P_1$ that is closed in~$M$;
\end{itemize}

For all vertices~$v \neq p_d$, it immediately follows from the construction that $\deg_{M'}(v) = f(v)$. The same holds for $p_d$ as well, because an edge is assigned to it either because an applicant is unmatched or because a post in $P_1$ is closed and we assumed that these add up to~$K$. 

It is easy to see that if there is an $f$-factor $M'$ in $G'$, then its restriction to $G$ is a feasible assignment $M$ of the same weight so that the number of unmatched applicants and the number of closed posts in $P_1$ add up to~$K$. Since every post $p_i \in P_1$ is connected to $p_d$ and $f(p_i) = 1$, it is either the case that $p_i$ is open in $M$ or $\{p_d,p_i\}\in M'$. Regarding posts outside of $P_1$, we need to show that the two edges incident to them are either both in $G$ or neither of them are in~$G$. Assume without loss of generality that $\{p_i,q_i^1\}\in M'$ and $\{p_i,q_i^2\}\notin M'$ for some $p_i \notin P_1$. Since $f(q_i^2) = 1$ and $\deg_{M'}(q_i^2) = 0$, $M'$ cannot be an $f$-factor.

So far we have shown that it is sufficient to test  $|A| + |P_1| +1$ values for $f(p_d)$, and collect the optimal assignments given by the maximum weight $f$-factors. Comparing the weight of these locally optimal solutions delivers a global optimum. A slight modification on the the graph corresponding to the $f$-factor instance will allow us to solve the problem by constructing just two instances, as against $|A| + |P_1| +1$ instances. Similar to the triangles attached to posts in $P \setminus P_1$, triangles are added to $p_d$ as well. The added vertices have $f$-value~1 and the added edges carry weight~0. The number of such triangles hanging on $p_d$ is~$\left\lceil\frac{|A| + |P_1|}{2}\right\rceil$. These triangles can take up all the $f$-value of $p_d$ if necessary, but by choosing the edge not incident to $p_d$ they can also allow $p_d$ to fill up its $f$-value with other edges. Since a triangle either takes up 0 or 2 of $p_d$'s $f$-value, we need to separate the two different parity cases. Thus, to cover all the $|A| + |P_1| + 1$ cases for possible values for $f(p_d)$, in one instance we set $f(p_d)$ to $|A| + |P_1| + 1$ and in the other instance $f(p_d) = |A| + |P_1|$.
\qed
\end{proof}


\section{Bounded treewidth graphs}
\label{sec:bounded_tw}

In this section, we investigate \textsc{wmlq} from the point of view of fixed-parameter tractability and analyze how efficiently the problem can be solved for instances with a bounded treewidth.

\paragraph{Fixed-parameter tractability.}
This field of complexity theory is motivated by the fact that in many applications of optimization problems certain input parameters stay small even for large instances. 
A problem, parameterized by a parameter $k$, is fixed-parameter tractable ($\FPT$) if there is an algorithm solving it in time $f(k)\cdot \phi(n)$, where $f : \mathbb{R} \rightarrow \mathbb{R}$ is a function, $\phi$ is a polynomial function, and $n$ is the input size of the instance. Note that this definition not only requires that the problem can be solved in polynomial time for instances where $k$ is bounded by a constant, but also that the dependence of the running time on $k$ is separable from the part depending on the input size. On the other hand, if a problem is shown to be \emph{$\W[1]$-hard}, then the latter property can only be fulfilled if $\FPT = \W[1]$, which would imply $\NP \subseteq \DTIME(2^{o(n)})$. For more details on fixed-parameter algorithms see, e.g.,~\cite{Nie06}.

\paragraph{Treewidth.}
In case of {\sc wmlq} we focus on the parameter \emph{treewidth}, which, on an intuitive level, describes the likeness of a graph to a tree.
A \emph{tree decomposition} of graph $G$ consists of a tree whose nodes---also called \emph{bags}---are subsets of~$V(G)$. These must satisfy the following three requirements.
\begin{enumerate}
\item Every vertex of $G$ belongs to at least one bag of the tree.
\item For every edge $\{a, p\} \in E(G)$, there is a bag containing both $a$ and $p$.
\item If a vertex in $V(G)$ occurs in two bags of the tree, then it also occurs in all bags on the unique tree-path connecting them.
\end{enumerate}
The \emph{width} of a tree decomposition with a set of bags $\mathcal{B}$ is $\max_{B \in \mathcal{B}} |B| - 1$.
The \emph{treewidth} of a graph $G$, $\tw(G)$, is the smallest width among all tree decompositions of~$G$. It is well known that a tree decomposition of smallest width can be found by a fixed-parameter algorithm when parameterized by $\tw(G)$~\cite{Bod96}.
\medskip

In the following, we show that {\sc wmlq} is fixed-parameter tractable when parameterized simultaneously by the treewidth and $u_{\max}$, whereas it remains $W[1]$-hard when only parameterized by the treewidth.
A similar study of the fixed-parameter tractability of the related \emph{extended global cardinality constraint problem} (\textsc{egcc}) was conducted in~\cite{SS11}.
\textsc{egcc} corresponds to the special case of the $D$-matching problem where the graph is bipartite and on one side of the bipartition all vertices have the domain $\{1\}$. In contrast with \textsc{wmlq}, \textsc{egcc} is a feasibility problem (note that the feasibility version of \textsc{wmlq} is trivial, as the empty assignment is always feasible).
 The authors of~\cite{SS11} provided a fixed-parameter algorithm for \textsc{egcc} when parameterized simultaneously by the treewidth of the graph and the maximum domain size, and they showed that the problem is $\W[1]$-hard when only parameterized by the treewidth. These results mirror our results for \textsc{wmlq}, and indeed both our FPT-algorithm for \textsc{wmlq} and the one in~\cite{SS11} are extensions of the same classic dynamic program for the underlying maximum independent set problem. However, our hardness result uses a completely different reduction than the one in~\cite{SS11}. The latter makes heavy use of the fact that the domains can be arbitrary sets, whereas in \textsc{wmlq}, we are confined to intervals.

\subsection{Algorithm for bounded treewidth graphs}

We will now describe an algorithm for solving \textsc{wmlq} in polynomial time for graphs with constant treewidth. The algorithm is a dynamic program that inductively computes a set of partial solutions for each bag of the tree decomposition. We will show how to obtain these solutions for a bag from the solutions of the children of that bag by a sequence of lemmas. Before we state these lemmas, we need to introduce a few more concepts.

\paragraph{Nice tree decomposition.}
For every tree decomposition with a specific treewidth, a \emph{nice tree decomposition}
of the same treewidth can be found in linear time~\cite{Klo94}. A nice tree decomposition
is characterized by an exclusive composition of the following four kinds of bags:
\begin{itemize}
\item leaf bag: $|B| = 1$ and $B$ has no child;
\item introduce bag: $B$ has exactly one child $B_1$, so that $B_1 \subset B$ and $|B \setminus B_1| = 1$;
\item forget bag: $B$ has exactly one child $B_1$, so that $B \subset B_1$ and $|B_1 \setminus B| = 1$;
\item join bag: $B$ has exactly two children $B_1$ and $B_2$, so that $B = B_1 = B_2$.
\end{itemize}
We will henceforth assume we are given such a nice tree decomposition.


\paragraph{Basic notation.}
For ease of exposition, we will define $\ell(a) := u(a) := 1$ for all $a \in A$. Furthermore, throughout this section we will deal with vectors $\alpha \in \mathbb{Z}^U$ for some $U \subseteq V$. We define the notion of extension and restriction of such a vector $\alpha$.
For $U' \subseteq U$ and $\alpha \in \mathbb{Z}^U$ define $\alpha|_{U'}$ as the \emph{restriction} of $\alpha$ to $U'$, i.e., $\alpha|_{U'} \in \mathbb{Z}^{U'}$ and $\alpha|_{U'}(v) = \alpha(v)$ for all $v \in U'$. For $v \in V \setminus U$ and $i \in \mathbb{Z}$ let further $[\alpha, i]_v$ be the \emph{extension} of $\alpha$ to $U \cup \{v\}$ defined by $[\alpha, i]_v(v') := \alpha(v')$ for all $v' \in U$ and $[\alpha, i]_v(v) := i$. 
For a set of edges $S$ we define $\alpha_{S, U} \in \mathbb{Z}^{U}$ by $\alpha_{S, U}(v) := \deg_S(v)$, for all $v\in U$.
We will also use the standard notation $E[S]$ for the set of edges with both endpoints in $S \subseteq V$.

\paragraph{Assignment vectors.}
For any bag $B$, let $V_B \subseteq V$ denote the set of vertices contained in the union of bags present in the subtree rooted at~$B$. We will define a graph $G_B = (V_B, E_B)$ where $E_B := E[V_B] \setminus E[B]$. A \emph{partial assignment} for bag $B$ is an assignment $M \subseteq E_B$ of $G_B$ such that $\deg_M(v) = 0$ or $\ell(v) \leq \deg_M(v) \leq u(v)$ for all $v \in V_B \setminus B$. Note that this definition allows applicants and posts in $B$ to be assigned arbitrarily often and that by definition of $G_B$, no vertex in $B$ is assigned to another vertex in~$B$. An \emph{assignment vector} for bag $B$ is a vector $\alpha \in X_B := \{0, \dots, u_{\max}\}^B$. We say a partial assignment $M$ for $B$ \emph{agrees} with an assignment vector $\alpha \in X_B$, if $\alpha(v) = \deg_M(v)$ for all $v \in B$.
For every bag $B$ and every $\alpha \in X_B$, let $\mathcal{M}_B(\alpha)$ be the set of partial assignments for $B$ that agree with $\alpha$ and let 
\[W_B(\alpha) := \max\left\{\{w(M) : M \in \mathcal{M}_B(\alpha)\right\} \cup \{-\infty\}\}\]
denote the optimal value of any assignment that agrees with $\alpha$ for the graph~$G_B$ (note that a value of $-\infty$ implies that no partial assignment $M$ agrees with~$\alpha$). We further denote the set of optimal partial assignments agreeing with $\alpha$ by
\[\mathcal{M}^*_B(\alpha) := \{M \in \mathcal{M}_B(\alpha) : w(M) = W_B(\alpha)\}.\]
In the following, we will provide a series of lemmas that reveals how to efficiently obtain an element of $\mathcal{M}^*_B(\alpha)$ for every $\alpha \in X_B$ for a bag $B$  (or showing $W_B(\alpha) = -\infty$), assuming such representatives of each set $\mathcal{M}^*_{B'}(\alpha)$ have already been computed for every child $B'$ of $B$ for all $\alpha \in X_{B'}$.

\begin{lemma}\label{lem:leaf}
	Let $B$ be a leaf bag. Then $\mathcal{M}^*_B(0) = \{\emptyset\}$ and $\mathcal{M}^*_B(\alpha) = \emptyset$ for any $\alpha \in X_B \setminus \{0\}$.
\end{lemma}
\begin{proof}
This follows directly from the fact that $E_B = \emptyset$ for all leaf bags and thus the only element in $B$ cannot be assigned.\qed
\end{proof}

\begin{lemma}\label{lem:introduce}
	Let $B$ be an introduce bag such that $B'$ is the only child of $B$ and $B \setminus B' = \{v'\}$. Let $\alpha \in X_B$. Then
	 \begin{align*}
	   \mathcal{M}^*_B(\alpha) = \begin{cases}
	    \mathcal{M}^*_{B'}(\alpha|_{B'}) & \textup{if } \alpha(v') = 0,\\
	    \emptyset & \textup{otherwise.}
	  \end{cases}
	\end{align*}
\end{lemma}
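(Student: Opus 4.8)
The plan is to show that the newly introduced vertex $v'$ is \emph{isolated} in the graph $G_b$, which collapses the computation for $b$ to that already done for its child $b'$. First I would extract the structural facts implied by $b$ being an introduce bag with $b \setminus b' = \{v'\}$. Since $v' \notin b'$ and $b'$ lies on every path in the decomposition tree from $b$ to any bag of the subtree rooted at $b'$, the connectivity property (item~3 of the tree decomposition definition) forces $v'$ to appear in no bag of that subtree. Hence $v' \notin V_{b'}$ and $V_b = V_{b'} \,\dot\cup\, \{v'\}$, and moreover $v'$ occurs in exactly one bag of the subtree rooted at $b$, namely $b$ itself.

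Next I would argue that $v'$ has no incident edge in $E_b = E[V_b] \setminus E[b]$. Take any $\{v', w\} \in E$ with $w \in V_b$; by item~2 of the definition some bag contains both $v'$ and $w$. If $w \in V_b \setminus b$, then $w$ occurs only in bags of the subtree rooted at $b$ (otherwise connectivity would put $w$ into $b$), while within that subtree $v'$ occurs only in $b$; so no common bag exists, a contradiction. Therefore $w \in b$, so $\{v', w\} \in E[b]$ and $\{v', w\} \notin E_b$. Combining this with $v' \notin V_{b'}$, a short set-theoretic check (writing $E[V_b] = E[V_{b'}] \,\dot\cup\, F$ with $F$ the edges at $v'$, noting $F \subseteq E[b]$, and $E[b] \cap E[V_{b'}] = E[b']$) gives $E_b = E_{b'}$. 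Thus $G_b$ is precisely $G_{b'}$ with the isolated vertex $v'$ added.

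With this in hand the remainder is bookkeeping against the definitions. Since $V_b \setminus b = V_{b'} \setminus b'$ and $E_b = E_{b'}$, a set $M \subseteq E_b$ is a partial assignment for $b$ if and only if it is a partial assignment for $b'$; and since $\delta(v') \cap E_b = \emptyset$, every such $M$ has $|M \cap \delta(v')| = 0$. Hence $M$ agrees with $\alpha \in X_b$ if and only if $\alpha(v') = 0$ and $M$ agrees with $\alpha|_{b'}$. This yields $\mathcal{M}_b(\alpha) = \mathcal{M}_{b'}(\alpha|_{b'})$ when $\alpha(v') = 0$ and $\mathcal{M}_b(\alpha) = \emptyset$ otherwise; passing to the maximum-weight members gives exactly the claimed formula for $\mathcal{M}^*_b(\alpha)$, with $W_b(\alpha) = -\infty$ in the second case. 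The only genuinely delicate step is the topological argument that $v'$ is isolated in $G_b$; once that is established, everything else is immediate.
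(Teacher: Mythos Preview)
Your proposal is correct and follows the same approach as the paper: the key observation in both is that $\Gamma(v') \cap V_b \subseteq b$ (equivalently, $\delta(v') \cap E_b = \emptyset$), established via Properties~2 and~3 of the tree decomposition. The paper compresses the argument into two lines and leaves the bookkeeping (that $E_b = E_{b'}$, $V_b \setminus b = V_{b'} \setminus b'$, and hence $\mathcal{M}_b(\alpha) = \mathcal{M}_{b'}(\alpha|_{b'})$ when $\alpha(v')=0$) implicit, whereas you spell it out in full---but there is no substantive difference.
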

\begin{proof}
  Note that $\Gamma(v') \cap V_B \subseteq B$ by Properties~2 and 3 of a tree decomposition. This implies $\delta(v') \cap E_B = \emptyset$ and hence the lemma follows.\qed
\end{proof}

\begin{lemma}\label{lem:forget}
	Let $B$ be a forget bag such that $B'$ is the unique child of $B$ and $B = B' \setminus \{v'\}$ for some $v' \in B'$. Let $\alpha \in X_B$. 
	Let $(S^*, i^*)$ be an optimal solution to
	\begin{align*}
	  \textup{[forget]}\qquad\max \ \ & w(S) + W_{B'}([\alpha - \alpha_{S, B},\, i - |S|]_{v'})\\
	  \textup{s.t.} \ \ & |S| \leq i,\\
	  & \alpha_{S, B} \leq \alpha\\
	  & S \subseteq \delta(v') \cap \delta(B),\\ 
	  & i \in \{0, \ell(v'), \dots, u(v')\}.
	\end{align*}
	Then $M \cup S^* \in \mathcal{M}^*_B(\alpha)$ for all $M \in \mathcal{M}_{B'}^*([\alpha - \alpha_{S, B}, i - |S|]_{v'})$. If the optimal solution to [forget] has value $-\infty$, then $\mathcal{M}^*_B(\alpha) = \emptyset$.
\end{lemma}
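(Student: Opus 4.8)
The plan is to set up a weight-preserving correspondence between the partial assignments $M \in \mathcal{M}_b(\alpha)$ and the pairs $(S,i)$ feasible for [forget] together with a partial assignment $M' \in \mathcal{M}_{b'}([\alpha - \alpha_{S,b}, i-|S|]_{v'})$. Everything rests on two structural facts that I would record first. Since a forget bag has $V_b = V_{b'}$ and $b = b' \setminus \set{v'}$, the edge sets satisfy $E_{b'} \subseteq E_b$ and $E_b \setminus E_{b'} = E[b'] \setminus E[b] = \delta(v') \cap \delta(b)$. Moreover, by Property~3 of the tree decomposition, $v'$ occurs in no bag outside the subtree rooted at $b'$, so by Property~2 every edge of $\delta(v')$ lies in $E[V_b]$, and since $v' \notin b$ it lies in $E_b$; hence $\delta(v') \subseteq E_b$ while $\delta(v') \cap E_{b'} = \delta(v') \setminus \delta(b)$. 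Thus the edges of $M$ at $v'$ split into a "cross" part $S := M \cap \delta(v') \cap \delta(b)$ (edges from $v'$ to $b$, all in $E_b \setminus E_{b'}$) and a "downward" part (edges into $V_{b'}\setminus b'$, all in $E_{b'}$).

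For the forward direction, take $M \in \mathcal{M}_b(\alpha)$ and put $M' := M \setminus S$; by the structural facts $M' \subseteq E_{b'}$. Since every edge of $S$ is incident to $v'$ and to a vertex of $b$, deleting $S$ does not change the $\delta$-degree of any vertex in $V_b \setminus (b \cup \set{v'})$, so $M'$ inherits the quota conditions and is a partial assignment for $b'$. Setting $i := |M \cap \delta(v')| = |S| + |M'\cap\delta(v')|$, I would check that $M'$ agrees with $[\alpha - \alpha_{S,b}, i-|S|]_{v'}$: on $b$ we have $|M'\cap\delta(v)| = \alpha(v) - \alpha_{S,b}(v)$ since $S$ contributes exactly $\alpha_{S,b}(v)$ edges at $v$, and at $v'$ we have $|M'\cap\delta(v')| = i - |S|$ by the definition of $i$. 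Then $(S,i)$ is feasible for [forget]: $S \subseteq \delta(v')\cap\delta(b)$ by construction; $\alpha_{S,b} \le \alpha$ and $|S| \le i$ because the corresponding entries of $[\alpha-\alpha_{S,b}, i-|S|]_{v'}$ are nonnegative (this also shows the argument of $W_{b'}$ lies in $X_{b'}$, using $\alpha \in X_b$ and $i \le u(v') \le u_{\max}$); and $i \in \set{0, \ell(v'), \dots, u(v')}$ because $M$ satisfies the quota at $v' \in V_b \setminus b$. Hence $w(M) = w(S) + w(M') \le w(S) + W_{b'}([\alpha-\alpha_{S,b}, i-|S|]_{v'})$, so $W_b(\alpha)$ is bounded above by the optimum of [forget].

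For the converse, given feasible $(S,i)$ and any $M' \in \mathcal{M}_{b'}([\alpha-\alpha_{S,b}, i-|S|]_{v'})$, set $M := M' \cup S$ (a disjoint union by the structural facts). I would verify $M \in \mathcal{M}_b(\alpha)$ with $w(M) = w(S) + w(M')$: the quota at any $v \in V_b \setminus (b \cup \set{v'})$ is unchanged by adding $S$; at $v'$ we get $|M\cap\delta(v')| = |S| + (i-|S|) = i \in \set{0, \ell(v'), \dots, u(v')}$; and for $v \in b$ we get $|M\cap\delta(v)| = \alpha_{S,b}(v) + (\alpha(v)-\alpha_{S,b}(v)) = \alpha(v)$. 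Combining both directions, the optimum of [forget] equals $W_b(\alpha)$; if $(S^*, i^*)$ attains it with finite value then for any $M \in \mathcal{M}^*_{b'}([\alpha-\alpha_{S^*,b}, i^*-|S^*|]_{v'})$ we have $w(M \cup S^*) = w(S^*) + W_{b'}(\cdot) = W_b(\alpha)$, so $M \cup S^* \in \mathcal{M}^*_b(\alpha)$; and if the optimum is $-\infty$, the forward direction applied to a hypothetical $M \in \mathcal{M}_b(\alpha)$ would yield a feasible $(S,i)$ of finite objective, so $\mathcal{M}_b(\alpha) = \emptyset$.

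The degree computations are routine; the one point I would flag explicitly is the distinction between $i$ and $i-|S|$. The child table $W_{b'}$ only records the \emph{downward} degree $i-|S|$ of $v'$ (the component of the child assignment vector at $v'$), whereas the quota at $v'$ must be enforced on the \emph{full} degree $i$, which additionally counts the $|S|$ edges from $v'$ to $b$. Keeping these two apart — and exploiting that no edge of $G_b$ joins two vertices of $b$, so that the edges of $M$ at the vertices of $b$ other than those in $S$ all live in $G_{b'}$ — is the crux; the rest is bookkeeping.
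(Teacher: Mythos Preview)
Your proof is correct and follows essentially the same approach as the paper: both establish a weight-preserving bijection between $\mathcal{M}_b(\alpha)$ and pairs consisting of a feasible $(S,i)$ for [forget] together with an element of $\mathcal{M}_{b'}([\alpha-\alpha_{S,b},i-|S|]_{v'})$, via $M \leftrightarrow (M\setminus S,\, S,\, |M\cap\delta(v')|)$ with $S = M\cap\delta(v')\cap\delta(b)$. Your version is more explicit than the paper's about the underlying structural facts (that $E_b\setminus E_{b'} = \delta(v')\cap\delta(b)$ and $\delta(v')\subseteq E_b$) and about the bookkeeping that shows $(S,i)$ is feasible and the child vector lies in $X_{b'}$, whereas the paper simply asserts these observations.
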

\begin{proof}
  Assume $\mathcal{M}_B(\alpha) \neq \emptyset$ and let $M' \in \mathcal{M}^*_B(\alpha)$. Let \mbox{$S' := M' \cap \delta(v') \cap \delta(B)$} and let $i' := \deg_{M'}(v')$. Observe that $(S', i')$ is a feasible solution to [forget] and that $M' \setminus S' \in \mathcal{M}_{B'}([\alpha - \alpha_{S',B},\, i' - |S'|]_{v'})$. We conclude that $w(M') \leq w(S') + W_{B'}([\alpha - \alpha_{S',B},], i' - |S'|]_{v'}) \leq w(S^*) + W_{B'}([\alpha - \alpha_{S^*,B},\, i^* - |S^*|]_{v'})$. In particular, this implies that the optimal solution value of [forget] is finite and thus there is some \mbox{$M \in \mathcal{M}^*_{B'}([\alpha - \alpha_{S^*,B},\, i^* - |S^*|]_{v'})$}.

  Thus let $M^* := M \cup S^*$. Observe that indeed $\deg_{M^*}(v) = \deg_M(v) + \deg_{S^*}(v) = \alpha(v) - \alpha_{S^*, B}(v) + \alpha_{S^*, B}(v) = \alpha(v)$ for all $v \in B$. Furthermore $\deg_{M^*}(v) = \deg_M(v) \in \{0, \ell(v), \dots, u(v)\}$ for all $v \in V_B \setminus B'$ by feasibility of $M$. Finally, $\deg_{M^*}(v') = i^* \in \{0, \ell(v'), \dots, u(v')\}$, implying $M^* \in \mathcal{M}_B(\alpha)$. As $w(M^*) = w(S^*) + W_{B'}([\alpha - \alpha_{S^*},\, i^* - |S^*|]_{v'}) \geq w(M')$, we conclude that indeed $M^* \in \mathcal{M}^*_B(\alpha)$.
   \qed
\end{proof}

\begin{lemma}\label{lem:join}
	Let $B$ be a join bag such that $B = B_1 = B_2$ for the two children $B_1, B_2$ of $B$.
	Let $\alpha \in X_B$. Let $(\alpha_1^*, \alpha_2^*)$ be optimal solutions to
	\begin{align*}
		\textup{[join]}\qquad\max \ \ & W_{B_1}(\alpha_1) + W_{B_2}(\alpha_2)\\
		\textup{s.t.} \ \ & \alpha_1 + \alpha_2 = \alpha\\
		& \alpha_1 \in X_{B_1},\ \alpha_2 \in X_{B_2}
	\end{align*}
	Then $M_1 \cup M_2 \in \mathcal{M}^*_B(\alpha)$ for all $M_1 \in \mathcal{M}^*_{B_1}(\alpha_1^*),\ M_2 \in \mathcal{M}^*_{B_2}(\alpha_2^*)$.
	If the optimal solution of [join] has value $-\infty$, then $\mathcal{M}^*_B(\alpha) = \emptyset$.
\end{lemma}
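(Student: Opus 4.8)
The plan is to mirror the structure of the proof of \cref{lem:forget}, exploiting that $G_b$ decomposes into the two subgraphs $G_{b_1}$ and $G_{b_2}$, which overlap only on the shared bag $b$. The structural backbone I would establish first is the following. Since $b = b_1 = b_2$ and, by Property~3 of the tree decomposition, the bags containing any fixed vertex form a connected subtree, we get $V_{b_1}\cap V_{b_2} = b$, and moreover every $v\in V_{b_i}\setminus b$ occurs only in bags of the subtree rooted at $b_i$. Combined with Property~2 (every edge has a bag containing both its endpoints, which for an edge incident to such a $v$ must lie in the subtree rooted at $b_i$), this yields $\delta(v)\cap E_b\subseteq E_{b_i}$ for all $v\in V_{b_i}\setminus b$, and consequently $E_{b_1}\cap E_{b_2}=\emptyset$ and $E_b = E_{b_1}\,\dot\cup\,E_{b_2}$. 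This is the only place where the tree-decomposition axioms enter, and it is the crux; everything after it is bookkeeping.

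For the ``$\le$'' direction I would take $M'\in\mathcal{M}^*_b(\alpha)$ (if $\mathcal{M}_b(\alpha)=\emptyset$ only the last sentence of the lemma needs checking) and split it as $M'_i := M'\cap E_{b_i}$, so that $M' = M'_1\,\dot\cup\,M'_2$ and $w(M') = w(M'_1)+w(M'_2)$. The structural fact gives $\delta(v)\cap M' = \delta(v)\cap M'_i$ for $v\in V_{b_i}\setminus b_i$, so $M'_i$ is a partial assignment for $b_i$; setting $\alpha_i:=\alpha_{M'_i,b_i}$ gives $\alpha_i\in X_{b_i}$ (as $\alpha_i(v)\le\alpha(v)\le u_{\max}$) and $\alpha_1+\alpha_2=\alpha$ (as $M'_1\cap\delta(v)$ and $M'_2\cap\delta(v)$ partition $M'\cap\delta(v)$ for $v\in b$). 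Hence $(\alpha_1,\alpha_2)$ is feasible for [join], and $w(M') = w(M'_1)+w(M'_2)\le W_{b_1}(\alpha_1)+W_{b_2}(\alpha_2)\le W_{b_1}(\alpha_1^*)+W_{b_2}(\alpha_2^*)$; in particular the optimum of [join] is finite and $\mathcal{M}^*_{b_i}(\alpha_i^*)\neq\emptyset$ for $i=1,2$.

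For the ``$\ge$'' direction I would pick $M_i\in\mathcal{M}^*_{b_i}(\alpha_i^*)$ and set $M^* := M_1\cup M_2$; since $M_1\subseteq E_{b_1}$, $M_2\subseteq E_{b_2}$ and these edge sets are disjoint, $w(M^*) = W_{b_1}(\alpha_1^*)+W_{b_2}(\alpha_2^*)$. Then I check $M^*\in\mathcal{M}_b(\alpha)$: for $v\in b$ we have $|M^*\cap\delta(v)| = \alpha_1^*(v)+\alpha_2^*(v) = \alpha(v)$, and for $v\in V_b\setminus b$, say $v\in V_{b_1}\setminus b$, the structural fact forces $\delta(v)\cap M_2=\emptyset$, so $|M^*\cap\delta(v)| = |M_1\cap\delta(v)|\in\{0\}\cup\{\ell(v),\dots,u(v)\}$ by feasibility of $M_1$. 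Combining with the ``$\le$'' bound gives $W_b(\alpha)\ge w(M^*) = W_{b_1}(\alpha_1^*)+W_{b_2}(\alpha_2^*)\ge w(M') = W_b(\alpha)$, hence $M^*\in\mathcal{M}^*_b(\alpha)$. Finally, if [join] has value $-\infty$, then by the contrapositive of the ``$\le$'' direction $\mathcal{M}_b(\alpha)=\emptyset$, so $\mathcal{M}^*_b(\alpha)=\emptyset$. The main obstacle is entirely the structural claim $E_b = E_{b_1}\,\dot\cup\,E_{b_2}$ (and that vertices of $V_{b_i}\setminus b$ keep all their $G_b$-edges inside $E_{b_i}$); once it is available, the argument is a routine adaptation of \cref{lem:forget}.
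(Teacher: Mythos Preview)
Your proposal is correct and follows essentially the same approach as the paper's proof: both rely on the fact that $V_{b_1}\cap V_{b_2}=b$ (from Properties~2 and~3) to get $M_1\cap M_2=\emptyset$, then verify $M^*=M_1\cup M_2\in\mathcal{M}_b(\alpha)$ by the same three-case degree check, and finally split an arbitrary $M'\in\mathcal{M}_b(\alpha)$ into $M'\cap E_{b_1}$ and $M'\cap E_{b_2}$ to produce a feasible pair for [join]. You are simply more explicit about the decomposition $E_b=E_{b_1}\,\dot\cup\,E_{b_2}$ and about the $-\infty$ case, which the paper leaves implicit.
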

\begin{proof}
  Let $M^* := M_1 \cup M_2$ for some $M_1 \in \mathcal{M}^*_{B_1}(\alpha_1^*),\ M_2 \in \mathcal{M}^*_{B_2}(\alpha_2^*)$. We first observe that $V_{B_1} \cap V_{B_2} = B$ by Properties 2 and 3
  of the tree decomposition and hence $M_1 \cap M_2 = \emptyset$. This implies that
  \begin{align*}
   \deg_{M^*}(v) = \begin{cases}
   \deg_{M_1}(v) \in \{0, \ell(v), \dots, u(v)\} & \text{if } v \in V_{B_1} \setminus B,\\
   \deg_{M_2}(v) \in \{0, \ell(v), \dots, u(v)\} & \text{if } v \in V_{B_2} \setminus B,\\
   \deg_{M_1}(v) + \deg_{M_2}(v) = \alpha(v) & \text{if } v \in B.
   \end{cases}
   \end{align*}
  Hence $M^* \in \mathcal{M}_B(\alpha)$.
   
  Now let $M' \in \mathcal{M}_B(\alpha)$. Let $M'_1 := M' \cap E_{B_1}$ and $M'_2 := M' \cap E_{B_2}$. We observe that $(\alpha_{M_1, B_1}, \alpha_{M_2, B_2})$ is a feasible solution to [join] and hence
   $w(M') = w(M'_1) + w(M'_2) \leq w(M_1) + w(M_2) = w(M^*)$.\qed
\end{proof}

Finally, we observe that after computing $W_R(\alpha)$ and the corresponding elements of $\mathcal{M}^*_R(\alpha)$ for each $\alpha$ for the root bag $R$, an optimal assignment for $G$ can be easily obtained.
\begin{lemma}\label{lem:root}
  Let $(S^*, \alpha^*)$ be an optimal solution of
  \begin{align*}
 \textup{[root]}\qquad \max\ \ & W_R(\alpha) + w(S) \\
 \textup{s.t.}\ \ & \alpha(v) + \deg_S(v) \in \{0, \ell(v), \dots, u(v)\} \quad \forall\,v \in R\\
  & \alpha \in X_R,\ S \subseteq E[R].
\end{align*}
  Then $S^* \cup M$ is an optimal solution to \textsc{wmlq} for any $M \in \mathcal{M}^*_R(\alpha^*)$.
\end{lemma}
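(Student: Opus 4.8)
The plan is to establish the two directions that by now are routine in this section: first that $S^* \cup M$ is a genuine assignment for $G$, and second that its weight upper-bounds the weight of an arbitrary feasible assignment. The structural fact that drives everything is that $r$ is the root, so $V_r = V$ and hence $E_r = E[V_r] \setminus E[r] = E \setminus E[r]$. Thus every edge of $G$ is either an ``interior'' edge of the bag $r$ (accounted for by the variable $S$ in [root]) or an edge of $G_r$ (accounted for by $\alpha$ together with the precomputed representative $M \in \mathcal{M}^*_r(\alpha^*)$), and these two edge sets are disjoint; in particular $S^* \cap M = \emptyset$.

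For feasibility I would argue as follows. Since $S^* \subseteq E[r]$, both endpoints of every edge of $S^*$ lie in $r$, so $\delta(v) \cap S^* = \emptyset$ for every $v \in V \setminus r$; consequently $|(S^* \cup M) \cap \delta(v)| = |M \cap \delta(v)| \in \{0\} \cup \{\ell(v),\dots,u(v)\}$ because $M$ is a partial assignment for $r$ and $v \in V_r \setminus r$. For $v \in r$, agreement of $M$ with $\alpha^*$ gives $|M \cap \delta(v)| = \alpha^*(v)$, hence $|(S^* \cup M) \cap \delta(v)| = \alpha^*(v) + |\delta(v) \cap S^*| \in \{0,\ell(v),\dots,u(v)\}$ by the constraint of [root]; applied to $a \in A$ (where $u(a) = 1$) this also yields $|(S^* \cup M)\cap\delta(a)| \le 1$. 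Hence $S^* \cup M$ is an assignment. I would also note here that the optimum of [root] is finite, since $(\alpha, S) = (0, \emptyset)$ is feasible with value $W_r(0) \ge w(\emptyset) = 0$ (as $\emptyset \in \mathcal{M}_r(0)$); therefore $W_r(\alpha^*)$ is finite and $\mathcal{M}^*_r(\alpha^*) \neq \emptyset$, so the statement is not vacuous.

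For optimality, let $M'$ be any feasible assignment for \textsc{wmlq}. Split it as $S' := M' \cap E[r]$ and $N' := M' \setminus E[r] = M' \cap E_r$, and define $\alpha' \in \mathbb{Z}^r$ by $\alpha'(v) := |N' \cap \delta(v)|$ for all $v \in r$. Then $\alpha' \in X_r$ because $|N' \cap \delta(v)| \le |M' \cap \delta(v)| \le u(v) \le u_{\max}$, and $N' \in \mathcal{M}_r(\alpha')$: for $v \in V_r \setminus r$ the interval condition is inherited from feasibility of $M'$ (edges of $E[r]$ miss such $v$), and agreement with $\alpha'$ holds by definition. Moreover $(S', \alpha')$ is feasible for [root], since for $v \in r$ we have $\alpha'(v) + |\delta(v) \cap S'| = |M' \cap \delta(v)| \in \{0,\ell(v),\dots,u(v)\}$. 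Therefore $w(M') = w(S') + w(N') \le w(S') + W_r(\alpha') \le w(S^*) + W_r(\alpha^*) = w(S^*) + w(M) = w(S^* \cup M)$, the last equality using $S^* \cap M = \emptyset$. This proves $S^* \cup M$ is optimal.

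There is no genuinely hard step: the proof is bookkeeping parallel to those of \cref{lem:forget} and \cref{lem:join}. The only point meriting care is the identity $E_r = E \setminus E[r]$ together with the resulting partition of an arbitrary feasible assignment into its part inside $E[r]$ and its part in $G_r$ — this is what makes the optimization [root] exhaustive over all feasible assignments, and once it is in place both the feasibility verification and the weight comparison are mechanical.
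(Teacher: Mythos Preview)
Your proof is correct and follows essentially the same approach as the paper's: verify feasibility by splitting vertices into $r$ and $V\setminus r$, then for optimality decompose an arbitrary assignment $M'$ into $S':=M'\cap E[r]$ and the remainder, showing this yields a feasible pair for [root]. Your version is in fact slightly more careful---you explicitly use $E_r = E\setminus E[r]$, define $\alpha'$ directly as the degree vector of $N'$ (rather than, as in the paper, defining it as the degree vector of $M'$ and then subtracting $\alpha'_{S',r}$), and you verify that the optimum of [root] is finite so that $\mathcal{M}^*_r(\alpha^*)\neq\emptyset$---but the underlying argument is identical.
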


\begin{proof}
Let $M^* := S^* \cup M$ for some $M \in \mathcal{M}^*_R(\alpha^*)$. Note that for $v \in V \setminus R$, we have $S^* \cap \delta(v) = \emptyset$ and hence $\deg_{M^*}(v) = \deg_M(v) \in \{0, \ell(v), \dots, u(v)\}$ by the feasibility of~$M$. Furthermore, for $v \in R$, we have $\deg_{M^*}(v) = \alpha^*(v) + \deg_{S^*}(v) \in \{0, \ell(v), \dots, u(v)\}$ by the feasibility of $S^*$ for [root]. We conclude that $M^*$ is indeed a feasible solution to \textsc{wmlq}.

Now let $M' \subseteq E$ be any solution to \textsc{wmlq}.
Define $S' := M' \cap E[R]$ and $\alpha' := \alpha_{M',R} - \alpha_{S',R}$. Observe that $(S', \alpha')$ is a feasible solution to [root] and that further $M' \setminus S' \in \mathcal{M}_R(\alpha')$.
We conclude that 
\begin{align*}
  w(M') \leq W_R(\alpha') + w(S') \leq W_R(\alpha^*) + w(S^*) = w(M^*),
\end{align*}
and thus $M^*$ is indeed an optimal solution to \textsc{wmlq}.\qed
\end{proof}

\begin{restatable}{theorem}{restateThmFPT}
\label{th:bounded_tw}
\textsc{wmlq} can be solved in time $O(T + (u_{\max})^{3\tw(G)}|E|)$, where $T$ is the time needed for computing a tree decomposition of $G$ of width $\tw(G)$. In particular, {\sc wmlq} can be solved in polynomial time when restricted to instances of bounded treewidth, and {\sc wmlq} parameterized by $\max\{\tw(G), u_{\max}\}$ is fixed-parameter tractable. 
\end{restatable}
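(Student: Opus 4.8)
The plan is to combine the structural lemmas (Lemmas~\ref{lem:leaf}--\ref{lem:root}) into a bottom-up dynamic program over a nice tree decomposition of~$G$ and then carefully bound the running time of each type of bag operation. First I would invoke~\cite{Bod96} to compute a tree decomposition of width $\tw(G)$ in time $T$, and then convert it into a nice tree decomposition of the same width in linear time via~\cite{Klo94}; since a nice tree decomposition still has $O(\tw(G)\cdot|V|)$ bags, this does not affect the asymptotics. The algorithm processes the bags in post-order: for each bag $b$ it computes, for every assignment vector $\alpha \in X_b = \{0,\dots,u_{\max}\}^b$, a representative element of $\mathcal{M}^*_b(\alpha)$ (or detects $W_b(\alpha) = -\infty$), using exactly the recurrences of the four lemmas according to the type of $b$. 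Correctness is then immediate: Lemma~\ref{lem:leaf} handles leaves, Lemmas~\ref{lem:introduce}, \ref{lem:forget}, \ref{lem:join} handle the inductive step for introduce, forget, and join bags respectively, and finally Lemma~\ref{lem:root} extracts an optimal \textsc{wmlq} solution from the table at the root. To keep the stored representatives of polynomial size one should, of course, not store the sets $M$ explicitly at every bag but rather store $W_b(\alpha)$ together with back-pointers and reconstruct an optimal assignment by a single top-down pass at the end; this is a routine implementation detail.

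The core of the argument is the running-time analysis. There are $|X_b| \le (u_{\max})^{|b|} \le (u_{\max})^{\tw(G)+1}$ assignment vectors per bag. For a \textbf{leaf} or \textbf{introduce} bag the work per vector is $O(1)$ (copying a pointer), so such bags cost $O((u_{\max})^{\tw(G)+1})$. For a \textbf{forget} bag with forgotten vertex $v'$, the optimization [forget] ranges over choices of $S \subseteq \delta(v')\cap\delta(b)$ and $i \in \{0,\ell(v'),\dots,u(v')\}$; since $|\delta(v')\cap\delta(b)| \le |b| \le \tw(G)+1$ and $i$ has at most $u_{\max}+1$ values, and since a subset $S$ of the edges to the at most $\tw(G)$ other vertices of $b$ is itself describable by how many edges go to each such vertex (bounded by $u_{\max}$), the number of relevant $(S,i)$ pairs is at most $(u_{\max})^{\tw(G)}\cdot(u_{\max}+1)$, giving total cost $O((u_{\max})^{2\tw(G)+2})$ per forget bag. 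For a \textbf{join} bag the optimization [join] ranges over decompositions $\alpha = \alpha_1 + \alpha_2$ with $\alpha_1,\alpha_2 \in X_b$; each coordinate of $\alpha_1$ may take at most $u_{\max}+1$ values, so there are at most $(u_{\max})^{|b|} \le (u_{\max})^{\tw(G)+1}$ choices of $\alpha_1$ (which determines $\alpha_2$), and hence the per-bag cost is $O((u_{\max})^{2\tw(G)+2})$. Summing over the $O(\tw(G)\cdot|V|)$ bags — and absorbing the $\tw(G)$ factor and the reconstruction pass into the polynomial part, bounding $|V| \le |E|+1$ after trivial preprocessing — yields the claimed bound $O(T + (u_{\max})^{3\tw(G)}|E|)$, where the exponent $3\tw(G)$ comfortably dominates the $2\tw(G)+2$ appearing above for the relevant range of $\tw(G)$.

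The two consequences follow at once. If $\tw(G) \le c$ for a constant $c$, the term $(u_{\max})^{3\tw(G)} \le (u_{\max})^{3c}$ is polynomial in the input size (as $u_{\max} \le |V|$ after preprocessing), so \textsc{wmlq} is polynomial-time solvable on instances of bounded treewidth. If instead we parameterize by $k := \max\{\tw(G), u_{\max}\}$, then $(u_{\max})^{3\tw(G)} \le k^{3k}$, so the running time has the form $f(k)\cdot\mathrm{poly}(|E|)$ with $f(k) = k^{3k}$ (plus the $\FPT$-computable term $T$ from~\cite{Bod96}), witnessing fixed-parameter tractability.

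The main obstacle is not correctness — that is handed to us essentially verbatim by the preceding lemmas — but getting the running-time bookkeeping tight, in particular arguing that the subset $S$ in [forget] and the split $(\alpha_1,\alpha_2)$ in [join] can each be enumerated in $(u_{\max})^{O(\tw(G))}$ time rather than in time exponential in the number of edges, and verifying that the chosen loose exponent $3\tw(G)$ indeed dominates all the per-bag costs once the number of bags and the reconstruction overhead are folded into the polynomial factor.
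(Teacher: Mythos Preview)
Your proposal is correct and follows essentially the same approach as the paper's proof: a bottom-up dynamic program over a nice tree decomposition, invoking \cref{lem:leaf,lem:introduce,lem:forget,lem:join,lem:root} for correctness and bounding the per-bag work by $(u_{\max})^{O(\tw(G))}$, with the join bags dominating. The only cosmetic differences are that the paper bounds the [forget] enumeration by $2^{|b|}\cdot(u_{\max}+1)$ (in a simple graph there is at most one edge from $v'$ to each vertex of $b$, so your ``how many edges go to each vertex'' phrasing is unnecessary) and that it explicitly remarks that the root bag may be assumed to have constant size by appending extra forget bags, a point you leave implicit; neither affects the stated bound.
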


\begin{proof} 
  In order to solve a given \textsc{wmlq} instance, the algorithm starts by computing a nice tree decomposition of~$G$ of width $\tw(G)$.  Note that $T$ is of the same order for tree decompositions and nice tree decompositions. Using \cref{lem:leaf,lem:introduce,lem:forget,lem:join,lem:root}, we can inductively compute a representative $M \in \mathcal{M}^*_B(\alpha)$ for every bag $B$ and every $\alpha \in X_b$, or deduce that $\mathcal{M}^*_B(\alpha) = \emptyset$. We first observe that $|X_B| = (u_{\max})^{\tw(G)}$, thus only $(u_{\max})^{\tw(G)}$ representatives have to be computed per bag. Furthermore, for each of the above lemmas, the necessary computations to derive an $M \in \mathcal{M}^*_B(\alpha)$ from representatives of $\mathcal{M}^*_{B'}(\alpha')$ of children $B'$ of $B$ can be done in time $O((u_{\max})^{2\tw(G) + 1})$. This is obvious for \cref{lem:leaf,lem:introduce}. For \cref{lem:forget,lem:join,lem:root} we observe that the sets of feasible solutions for the corresponding optimization problems [forget], [join], and [root] have size at most $2^{|B|} \cdot (u_{\max} + 1)$,  $(u_{\max})^{2\tw(G)}$, and $2^{|R|^2} \cdot (u_{\max})^{\tw(G)}$, respectively (note that without loss of generality we can assume $|R|$ to be of constant size by introducing at most $\tw(G)$ additional forget bags). The theorem then follows from the fact that the number of bags is linear.\qed
\end{proof}

\subsection{$\W[1]$-hardness for parameterizing by treewidth only}

While our algorithm runs in polynomial time for bounded treewidth, the degree of the polynomial depends on the treewidth and the algorithm only becomes a fixed-parameter algorithm when parameterizing by treewidth and $u_{\max}$ simultaneously. We will now show by a reduction from \textsc{Minimum Maximum Outdegree} that this dependence is necessary under the assumption that $\FPT \neq \W[1]$. 

\begin{pr}\textsc{Minimum Maximum Outdegree}\ \\
	Input: A graph $G = (V, E)$, edge weights $w: E \rightarrow \mathbb{Z}_+$ encoded in unary and a degree-bound $r \in \mathbb{Z}_+$.\\
Task: Find an orientation $D$ of $G$ such that $\sum_{e \in \delta_D^+(v)} w(e) \leq r$ for all $v \in V$, where $\delta_D^+(v)$ stands for the set of edges oriented so that their tail is~$v$.
\end{pr}

\begin{theorem}[Theorem~5 from~\cite{Sze11}]
  \textsc{Minimum Maximum Outdegree} is $W[1]$-hard when parameterized by treewidth.
\end{theorem}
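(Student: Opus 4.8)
The plan is to prove the statement by a parameterized reduction from \textsc{Multicolored Clique}, which is $\W[1]$-hard when parameterized by the number $k$ of color classes. Given an instance $H$ with color classes $W_1, \dots, W_k$, each identified with $\{1, \dots, n\}$, I would construct an instance $(G, w, r)$ of \textsc{Minimum Maximum Outdegree} whose treewidth is bounded by a function $f(k)$ of $k$ alone, and which admits a feasible orientation if and only if $H$ contains a multicolored clique. Because the weights are encoded in unary, I must keep all weights polynomially bounded in $n$ so that the construction runs in polynomial time; a hypothetical algorithm solving \textsc{Minimum Maximum Outdegree} in time $g(\tw(G)) \cdot \mathrm{poly}$ would then decide \textsc{Multicolored Clique} in time $g(f(k)) \cdot \mathrm{poly}(|H|)$, contradicting $\FPT \neq \W[1]$.

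The architecture I have in mind consists of $k$ \emph{selection gadgets}, one per color, and $\binom{k}{2}$ \emph{verification gadgets}, one per pair of colors, linked by \emph{wires} that transmit the index of the chosen vertex of each color, encoded as a unary weight. The basic mechanism I rely on is global: the sum over all vertices of the weighted outdegree equals the total weight of all edges, so if the effective budget of every vertex is arranged to sum to exactly this total, then every constraint $\sum_{e \in \delta_D^+(v)} w(e) \le r$ must hold with equality. This turns the single global upper bound into an \emph{exact} weighted-outdegree constraint at each vertex, which is the workhorse of the construction; per-vertex effective budgets differing from the common $r$ are realized by attaching auxiliary edges whose orientation is forced by standard pendant gadgets. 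With exact constraints available, a selection gadget presents at one vertex a collection of edges whose distinct unary weights encode the $n$ candidates of a color, and the exact-outdegree constraint forces precisely one of them to be selected; the chosen index is then copied along a bounded-degree tree of transfer gadgets so that it becomes available, on a separate wire, to each of the $k-1$ verification gadgets involving that color. A verification gadget for a pair $(c, c')$ reads the two incoming indices and, again via exact-weight constraints, is feasible exactly when the pair corresponds to an edge of $H$; encoding each edge $(i, j)$ as a single number such as $i(n+1) + j$ lets the gadget test edge membership by one weighted comparison.

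The treewidth bound is the reason the whole instance must be laid out as a thin skeleton. I would argue that the interaction graph whose nodes are the $O(k^2)$ gadgets and whose links are the wires has treewidth $O(k)$, being essentially determined by $k$, and that each individual gadget and each wire is a graph of constant treewidth attached to the rest through a constant number of interface vertices. Composing tree decompositions of the gadgets along a tree decomposition of the skeleton then yields a decomposition of $G$ of width $f(k) = O(k^2)$, independent of $n$. The crucial point is that all the large magnitudes live in the unary edge weights and never in the structure, so increasing the number of candidates widens the weights but not the bags.

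The main obstacle is exactly this tension: the verification must compare and match numbers as large as $n$ (or $n^2$ once pairs are encoded), yet the gadgets performing these comparisons must remain structurally thin and interface with the remainder through boundedly many vertices. The delicate work is to design the exact-outdegree primitive together with the selection and verification gadgets so that (i) the intended orientations are the \emph{only} feasible ones, i.e. no ``cheating'' orientation can meet all local budgets without corresponding to a genuine clique, and (ii) every gadget has constant treewidth and a constant-size interface. Once soundness and completeness of the gadgets are established and the width is shown to depend on $k$ only, the parameterized reduction, and hence the $\W[1]$-hardness of \textsc{Minimum Maximum Outdegree} parameterized by treewidth, follows.
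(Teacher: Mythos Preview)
The paper does not contain a proof of this theorem. It is quoted verbatim as Theorem~5 from~\cite{Sze11} and used as a black box: the subsequent theorem reduces \textsc{Minimum Maximum Outdegree} to \textsc{mlq} to inherit the $\W[1]$-hardness. So there is nothing in the paper for your proposal to be compared against.

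What you have written is an outline of how one might reprove Szeider's result from scratch, not a response to anything the present paper does. As a sketch it is plausible in spirit---a parameterized reduction from \textsc{Multicolored Clique} with selection and verification gadgets, using the tight-budget trick to turn the degree bound into exact outdegree constraints---but it remains at the level of architecture. The hard part you yourself flag, namely building constant-treewidth gadgets that faithfully select and verify indices encoded in unary weights while ruling out all unintended orientations, is precisely the content of a proof, and you have not supplied it. Moreover, your treewidth accounting is loose: you assert the skeleton has treewidth $O(k)$ and then conclude $f(k) = O(k^2)$ without explaining the jump, and ``composing tree decompositions along a tree decomposition of the skeleton'' needs the interface sizes and the skeleton width to be added, not multiplied. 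None of this is fatal to the strategy, but as written the proposal is a plan, not a proof; if you want to prove the cited theorem rather than cite it, you need to actually construct the gadgets and verify soundness, completeness, and the width bound.
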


\newcommand{\edgestartpost}[2]{p_{#1,#2}\xspace}
\newcommand{\edgeendpost}[2]{p_{#1,#2}\xspace}
\newcommand{\nodepost}[1]{p_{#1}\xspace}
\newcommand{\edgestartapp}[3]{a^{#3}_{#1,#2}\xspace}
\newcommand{\edgeendapp}[3]{a^{#3}_{#1,#2}\xspace}
\newcommand{\edgeapp}[1]{z_{#1}\xspace}

\begin{theorem}
  \textsc{mlq} is $\W[1]$-hard when parameterized by treewidth, even when restricted to instances where $\ell(p) \in \{0, u(p)\}$ for every $p \in P$.
\end{theorem}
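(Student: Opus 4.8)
The plan is to reduce from \textsc{Minimum Maximum Outdegree}, which is $W[1]$-hard parameterized by treewidth by the cited theorem of Szeider, and to do so with a parameter-preserving reduction (treewidth of the \textsc{mlq} instance bounded by a function of the treewidth of the orientation instance). Given an instance $(G=(V,E), w, r)$ with $w$ encoded in unary, the core idea is to model the choice of orientation of each edge $e=\{x,y\}$ as a gadget in which the edge's weight $w(e)$ "flows" either towards $x$ or towards $y$, and to model the degree bound at each vertex $v$ by a post whose quota forces exactly $r$ units of ``incoming'' weight to be absorbed there. Since \textsc{mlq} has unit edge weights, a weight value $w(e)$ must be simulated by $w(e)$ parallel applicants (or a small path gadget of that many applicants) — this is exactly why the unary encoding of $w$ matters: it keeps the constructed instance of polynomial size.

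Concretely, for each original vertex $v\in V$ I would create a ``vertex post'' $p_v$ with $\ell(p_v)=u(p_v)$ equal to something like $r$ plus the total weight of edges at $v$ minus $r$ — more carefully, set things up so that $p_v$ must receive exactly (total weight incident to $v$) $-\,r$ units, representing the outgoing weight. For each edge $e=\{x,y\}$ I would create an ``edge gadget'' consisting of $w(e)$ applicants together with one auxiliary post $p_e$ with $\ell(p_e)\in\{0,u(p_e)\}$ (this is where the restriction $\ell(p)\in\{0,u(p)\}$ is used); the gadget is wired so that in any feasible assignment either all $w(e)$ of its applicants are matched to $p_x$ (edge oriented out of $x$) or all of them are matched to $p_y$ (oriented out of $y$), with $p_e$ absorbing the applicants in the ``other'' direction via its all-or-nothing lower quota. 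One then checks: an orientation $D$ with $\sum_{e\in\delta_D^+(v)}w(e)\le r$ for all $v$ exists if and only if the constructed \textsc{mlq} instance has a feasible assignment of the target size (equivalently, one simply needs feasibility, augmented with enough slack applicants so that the quota at $p_v$ can always be met exactly when $v$'s outdegree-weight is $\le r$). The forward direction is direct: orient, assign each edge gadget accordingly, fill $p_v$ to exactly its quota using outgoing-edge applicants plus slack. The reverse direction argues that feasibility of the assignment forces a consistent orientation at every edge gadget and then that the quota at each $p_v$ certifies the degree bound.

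The main obstacle — and the step I would spend the most care on — is bounding the treewidth of the constructed instance by a function of $\tw(G)$ (ideally $\tw(G)+O(1)$), since this is what makes the reduction a genuine $W[1]$-hardness proof for the treewidth parameter and not merely an $\NP$-hardness proof. The key observations: each edge gadget and each vertex post is a small ``blow-up'' attached locally to the endpoints of an edge of $G$, so starting from a tree decomposition of $G$ one can replace each bag by the corresponding bags-with-gadgets; the $w(e)$ parallel applicants in an edge gadget only need to co-occur with $p_x$, $p_y$ and $p_e$, so they can be handled one at a time via a chain of introduce/forget bags, keeping bag size bounded by $\tw(G)+c$ for a small constant $c$. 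I would also need to double-check that the slack-applicant construction at the $p_v$'s does not blow up treewidth (again, slack applicants attach only to the single post $p_v$, so they add only a bounded overhead). A secondary technical point is getting the quota arithmetic at $p_v$ exactly right so that ``feasible'' $\iff$ ``outdegree weight $\le r$'' — here one typically pads each $p_v$ with up to $r$ slack applicants so that any outgoing weight from $0$ up to $r$ can be completed to the fixed quota, and no more.
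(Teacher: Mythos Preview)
Your overall strategy is exactly the paper's: reduce from \textsc{Minimum Maximum Outdegree}, encode each edge weight $w(e)$ by $w(e)$ parallel applicants (exploiting the unary encoding), attach the edge gadget locally to the two endpoint posts, and argue that the resulting tree decomposition has width $\tw(G)+O(1)$. Where you diverge is in the gadget details. The paper's vertex post $p_v$ simply has $\ell(p_v)=0$ and $u(p_v)=r$, which already absorbs any outgoing weight in $[0,r]$ without your slack-applicant padding; and for each edge $e=\{v,v'\}$ the paper uses \emph{two} edge posts $p_{e,v}$, $p_{e,v'}$ with $\ell=u=w(e)+1$, sharing a single extra applicant $z_e$ that can go to only one of them, so exactly one of the two edge posts opens and the corresponding $w(e)$ applicants on the other side are forced into the adjacent vertex post. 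The decision version is then simply ``can all applicants be served?''.

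Your single-post edge gadget is the one place where the sketch is genuinely incomplete: with one post $p_e$ and an all-or-nothing lower quota, nothing you wrote prevents the $w(e)$ applicants from being split arbitrarily between $p_x$ and $p_y$ when $p_e$ stays closed, so a consistent orientation is not forced. The paper's two-post-plus-shared-applicant trick is precisely what enforces the dichotomy. If you fix this (e.g.\ by adopting the paper's gadget, or by any other device that makes the $w(e)$ applicants go as a block), the rest of your argument---including the treewidth bound via introduce/forget chains for the parallel applicants---goes through and matches the paper's.
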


\begin{proof}
  Given an instance $(G = (V, E), w, r)$ of \textsc{Minimum Maximum Outdegree}, we construct an instance $(G' = (A\,\dot\cup\, P, E'), \ell, u)$ of \textsc{mlq} as follows:
  \begin{itemize}
   \item For every vertex $v \in V$ we introduce a post $\nodepost{v} \in P$ with lower quota $\ell(\nodepost{v}) = 0$ and upper quota $u(\nodepost{v}) = r$. 
   \item For every edge $e = \{v, v'\} \in E$, we introduce two posts $\edgestartpost{e}{v}$ and $\edgeendpost{e}{v'}$ with identical lower and upper quotas of $w(e)+1$, i.\,e.\,, 
   $$\ell(\edgestartpost{e}{v}) = \ell(\edgeendpost{e}{v'}) = u(\edgestartpost{e}{v}) = u(\edgeendpost{e}{v'}) = w(e) + 1.$$
   We also add $2w(e) + 1$ applicants $ \edgestartapp{e}{v}{1}, \dots,  \edgestartapp{e}{v}{w(e)}, \edgeendapp{e}{v'}{1}, \dots, \edgeendapp{e}{v'}{w(e)}, \edgeapp{e}$, 
   which are connected to the posts by the edges 
   $$\{\nodepost{v}, \edgestartapp{e}{v}{i} \}, \{ \edgestartapp{e}{v}{i}, \edgestartpost{e}{v}\}, \{\nodepost{v'}, \edgeendapp{e}{v'}{i}\}, \{\edgeendapp{e}{v'}{i}, \edgeendpost{e}{v'}\} \text{ for } i \in \{1, \dots, w(e)\}$$ as well as $\{\edgestartpost{e}{v},  \edgeapp{e}\}$ and $\{ \edgeapp{e}, \edgeendpost{e}{v'}\}$. This construction is shown in \cref{fi:theorem6}.
  \end{itemize}

\begin{figure}[htbp]
\centering
\begin{tikzpicture}[scale=0.72, transform shape]

\tikzstyle{post} = [circle, draw=black, fill=black, scale= 0.8]
\tikzstyle{applicant} = [circle, draw=black, fill=white, scale= 0.8]
\tikzstyle{edgelabel} = [rectangle, fill=white]
\definecolor{MyPurple}{RGB}{197,0,205}
\pgfmathsetmacro{\d}{0.7}
\pgfmathsetmacro{\b}{3}


\begin{scope}[xshift=12cm, yshift=5cm]
 \node[post, label=below:$v$]  (v) at            (0, 0) {};
 \node[post, label=above:$w$]  (w) at            (1.5, 2) {};
 \node[post, label=below:$x$]  (x) at            (3, 0) {};
 \draw[very thick] (v) -- node[auto] {$e$} node[auto,swap] {$2$} (w) -- node[auto] {$f$} node[auto,swap] {$1$} (x) -- node[auto] {$g$} node[auto,swap] {$3$} (v);
 \node[]  (r) at            (0, 2) {$r = 3$};
\end{scope}

\begin{scope}[xshift=2cm, yshift=6.7cm]
 \node[post,rectangle] (pr) at            (0, 0) {};
 \node[anchor=west] (text) at (pr.east) {\begin{minipage}{4.25cm} post vertex $p_v$ with lower quota~0 and upper quota $r$\end{minipage}};
 \node[post]  (pc) at   (0, -1) {};
 \node[anchor=west] (text) at (pc.east) {\begin{minipage}{4.25cm} post vertex $p_{e,v}$ with lower quota and upper quota $w(e)+1$\end{minipage}}; 
 \node[applicant]  (a) at                            (0, -1.75) {};
 \node[anchor=west] (text) at (a.east) {\begin{minipage}{4.25cm} applicant vertex \end{minipage}};  
 \draw[thick] ($(pr.north west)+(-0.2,0.4)$) rectangle ($(a.south -| text.east) + (0.2,-0.2)$);
\end{scope}

\node[post, rectangle, label=45:$\nodepost{v}$]  (v) at            (16, 3) {};
\node[post, rectangle, label=above:$\nodepost{w}$]  (w) at            (6, 3) {};
\node[post, rectangle, label=above:$\nodepost{x}$]  (x) at            (10, 3) {};

\node[post, label=above:$\edgestartpost{e}{v}$] (ev) at    (2, 3) {};
\node[post, label=above:$\edgeendpost{e}{w}$] (ew) at      (4, 3) {};
\node[applicant, label=below:$\edgestartapp{e}{v}{1}$] (ae_2) at (1, 0) {};
\node[applicant, label=below:$\edgestartapp{e}{v}{2}$] (ae_3) at (2, 0) {};
\node[applicant, label=below:$\edgeapp{e}$] (ae_1) at            (3, 0) {};
\node[applicant, label=below:$\edgeendapp{e}{w}{1}$] (ae_4) at   (4, 0) {};
\node[applicant, label=below:$\edgeendapp{e}{w}{2}$] (ae_5) at   (5, 0) {};

\node[post, label=above:$\edgestartpost{f}{w}$] (fw) at    (7.5, 3) {};
\node[post, label=above:$\edgeendpost{f}{x}$] (fx) at      (8.5, 3) {};
\node[applicant, label=below:$\edgestartapp{f}{w}{1}$] (af_2) at (7, 0) {};
\node[applicant, label=below:$\edgeapp{f}$] (af_1) at             (8, 0) {};
\node[applicant, label=below:$\edgeendapp{f}{x}{1}$] (af_3) at   (9, 0) {};

\node[post, label=above:$\edgestartpost{g}{x}$] (gx) at    (13, 3) {};
\node[post, label=above:$\edgeendpost{g}{v}$] (gv) at      (15, 3) {};
\node[applicant, label=below:$\edgestartapp{g}{x}{1}$] (ag_2) at (11, 0) {};
\node[applicant, label=below:$\edgestartapp{g}{x}{2}$] (ag_3) at (12, 0) {};
\node[applicant, label=below:$\edgestartapp{g}{x}{3}$] (ag_4) at (13, 0) {};
\node[applicant, label=below:$\edgeapp{g}$] (ag_1) at            (14, 0) {};
\node[applicant, label=below:$\edgeendapp{g}{v}{1}$] (ag_5) at   (15, 0) {};
\node[applicant, label=below:$\edgeendapp{g}{v}{2}$] (ag_6) at   (16, 0) {};
\node[applicant, label=below:$\edgeendapp{g}{v}{3}$] (ag_7) at   (17, 0) {};

\draw [very thick] (ae_1) --  (ev);
\draw [very thick] (ae_1) --  (ew);
\draw [very thick] (ae_2) --  (ev);
\draw [very thick] (ae_3) --  (ev);
\draw [very thick] (ae_4) --  (ew);
\draw [very thick] (ae_5) --  (ew);

\draw [very thick] (af_1) --  (fw);
\draw [very thick] (af_1) --  (fx);
\draw [very thick] (af_2) --  (fw);
\draw [very thick] (af_3) --  (fx);

\draw [very thick] (ag_1) --  (gx);
\draw [very thick] (ag_1) --  (gv);
\draw [very thick] (ag_2) --  (gx);
\draw [very thick] (ag_3) --  (gx);
\draw [very thick] (ag_4) --  (gx);
\draw [very thick] (ag_5) --  (gv);
\draw [very thick] (ag_6) --  (gv);
\draw [very thick] (ag_7) --  (gv);

\draw [very thick] (af_3) --  (x);
\draw [very thick] (ag_2) --  (x);
\draw [very thick] (ag_3) --  (x);
\draw [very thick] (ag_4) --  (x);
\draw [very thick] (ae_2) -- ++(0,4) -| (v.25);
\draw [very thick] (ae_3) -- ++(1,3.8) -| (v);
\draw [very thick] (ag_5) --  (v);
\draw [very thick] (ag_6) --  (v);
\draw [very thick] (ag_7) --  (v);
\draw [very thick] (af_2) --  (w);
\draw [very thick] (ae_4) --  (w);
\draw [very thick] (ae_5) --  (w);


\end{tikzpicture}
\caption{The transformation of the \textsc{Minimum Maximum Outdegree} instance in the upper right corner to a \textsc{mlq} instance. The numbers on the edges of the \textsc{Minimum Maximum Outdegree} instance are the edge weights.}
\label{fi:theorem6}
\end{figure}
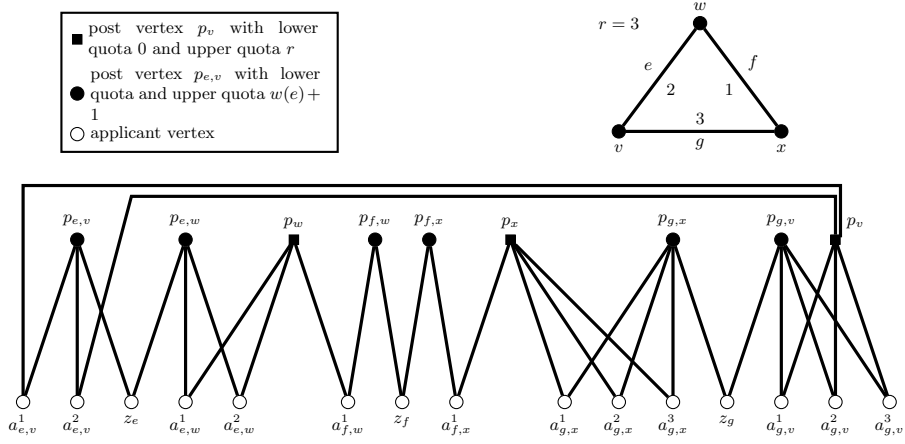

  We show that the constructed instance has a solution serving all applicants if and only if the \textsc{Minimum Maximum Outdegree} instance has an orientation respecting the bound on the weighted outdegree. 
  
  First assume there is an orientation $D$ of $G$ with maximum weighted outdegree at most~$r$. Then consider the assignment that assigns for every oriented edge $(v, v') \in D$ the $w(e)$ applicants $a^i_{e, v}$ to $p_v$ and the $w(e) + 1$ applicants $a^i_{e, v'}$ and $z_e$ to~$p_{e, v'}$. As the weighted outdegree of vertex $v$ is at most $r$, every post $p_{v}$ gets assigned at most $r = u(p_v)$ applicants.
  
  Now assume $M$ is a feasible assignment of applicants to posts serving every applicant. In particular, for every edge $e = \{v,v'\} \in E$, applicant $z_e$ is assigned to either $p_{e, v}$ or $p_{e, v'}$ and exactly one of these two posts is open because the lower bound of $w(e) + 1$ can only be met if $z_e$ is assigned to the respective post. If $p_{e, v}$ is open then all $w(e)$ applicants $a^i_{e, v'}$ are assigned to $p_{v'}$ and none of the applicants $a^i_{e, v}$ is assigned to $p_{v}$, and vice versa if $p_{e, v'}$ is open. Consider the orientation obtained by orienting every edge $e$ from $v$ to $v'$ if and only if $p_{e, v}$ is open. By the above observations, the weighted outdegree of vertex $v$ corresponds to the number of applicants assigned to post $p_v$, which is at most~$r$.
  
  Finally, note that $G'$ can be constructed in time polynomial in the input size of the \textsc{Minimum Maximum Outdegree} instance as the weights are encoded in unary there. Furthermore, the treewidth of $G'$ is at most $\max \{\tw(G), 3\}$. To see this, start with a tree decomposition of $G$ and identify each vertex $v \in V$ with the corresponding post~$p_v$. For every edge $e = \{v, v'\} \in E$, there is a bag $B$ with $p_v, p_v' \in B$. We add the new bag $B_e = \{p_v, p_v', p_{e, v}, p_{e, v'}\}$ as a child to~$B$. We further add the bags $B_{z_e} = \{p_{e, v}, p_{e, v'}, z_e\}$, $B_{a^i_{e,v}} = \{p_v, p_{e, v}, a^i_{e,v}\}$ and $B_{a^i_{e,v'}} = \{p_{v'}, p_{e, v'}, a^i_{e,v'}\}$ for $i \in \{1, \dots, w(e)\}$ as children to~$B_e$. Observe that the tree of bags generated by this construction is a tree decomposition. Furthermore, since we did not increase the size of any of the existing bags and added only bags of size at most $4$, the treewidth of $G'$ is at most $\max \{\tw(G), 3\}$.\qed
\end{proof}

\section{Approximation}
\label{se:approx}
Having established the hardness of \textsc{wmlq} even for very restricted instances in Theorem~\ref{th:max_spa_np}, we turn our attention towards approximability. In this section, we give an approximation algorithm and corresponding inapproximability bounds expressed in terms of $|A|, |P|$ and upper quotas in the graph.

The method, which is described formally in Algorithm~\ref{al:greedy}, is a simple greedy algorithm. We say a post $p$ is \emph{admissible} if it is not yet open and $|\Gamma(p)| \geq \ell(p)$. The algorithm iteratively opens an admissible post maximizing the assignable weight, i.e., it finds a post $p' \in P$ and a set $A'$ of applicants in its neighborhood $\Gamma(p')$ with $\ell(p') \leq |A'| \leq u(p')$ such that $\sum_{a \in A'} w(a, p')$ is maximized among all such $(p', A')$ pairs. It then removes the assigned applicants from the graph (potentially rendering some posts inadmissible) and re-iterates until no admissible post is~left.

\begin{algorithm}[h]
\caption{Greedy algorithm for \textsc{wmlq}}
\label{al:greedy}
\begin{algorithmic}
\State Initialize $P_0 = \{p \in P \,:\, |\Gamma(p)| \geq \ell(p)\}$.
\State Initialize $A_0 = A$.
\While{$P_0 \neq \emptyset$}
    \State Find a pair $p' \in P_0$ and $A' \subseteq \Gamma(p') \cap A_0$ with $|A'| \leq u(p')$ such that $\sum_{a \in A'} w(a, p')$ is maximized among all such pairs. 
    \State Open $p'$ and assign all applicants in $A'$ to it.
    \State Remove $p'$ from $P_0$ and remove the elements of $A'$ from $A_0$.
    \For{$p \in P_0$ with $\ell(p) > |\Gamma(p) \cap A_0|$}
    	\State Remove $p$ from $P_0$.
    \EndFor
\EndWhile
\end{algorithmic}
\end{algorithm}

\begin{remark}
As an alternative to Algorithm~\ref{al:greedy}, one could use a reduction from \textsc{wmlq} to the set packing problem. The elements in the universe of the set packing problem would be $A\cup P$. For each post $p$ and for each subset $S \subset \Gamma(p)$, such that $l(p) \le |S| \le u(p)$, we create a set $S \cup \{p\}$ for the set packing instance. A feasible set packing then corresponds to a feasible assignment of the same weight. However, if the difference between $p$'s upper and lower quota is not bounded by a constant, this would create an exponential-sized input for the set packing problem and we could only employ an oracle-based algorithm known for the set packing problem to solve \textsc{wmlq}. The greedy algorithm known for the set packing problem~\cite{CH01} can be made to work in a fashion similar to the algorithm presented above.
\end{remark}

In the following we give a tight analysis of the algorithm, establishing approximation guarantees in terms of the number of posts $|P|$, the number of applicants $|A|$, and the maximum upper quota $u_{\max} := \max_{p \in P} u(p)$ over all posts. We also provide two examples that show that our analysis of the greedy algorithm is tight for each of the described approximation factors. We further show that the approximation ratios given above for \textsc{wmlq} are almost tight from the point of view of complexity theory.

\begin{restatable}{theorem}{restategreedypos}\label{thm:greedy-approximation}
  Algorithm~\ref{al:greedy} is an $\alpha$-approximation algorithm for \textsc{wmlq} with \linebreak \mbox{$\alpha = \min \{|P|,\, |A|,\, u_{\max} + 1\}$}. Furthermore, for \textsc{mlq}, Algorithm~\ref{al:greedy} is a $\sqrt{|A|} + 1$-approximation algorithm. It can be implemented to run in time $O(|E| \log |E|)$.
\end{restatable}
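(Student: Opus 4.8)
The plan is to prove three approximation guarantees and the running time separately, with the greedy analysis being the common core.

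\textbf{The greedy analysis via a charging argument.} Let $\OPT$ denote the weight of an optimal assignment and let $G$ denote the weight returned by Algorithm~\ref{al:greedy}. Fix an optimal solution $M^*$. I would argue that we can charge the weight that $M^*$ assigns to each applicant to the greedy solution in such a way that each greedy edge $(a,p')$ receives charge at most $\alpha \cdot w(a,p')$. Specifically, consider the iteration in which the greedy algorithm opens a post $p'$ with applicant set $A'$. The key observation is that at the moment $p'$ is chosen, for every post $p$ that is still admissible, the greedy pair $(p', A')$ had assignable weight at least as large as any feasible assignment to $p$ using currently available applicants. I would use this to bound the $M^*$-weight of applicants that become ``blocked'' by this greedy step: each applicant $a \in A'$ was assigned by $M^*$ to at most one post $p_a$, and when $p_a$ later becomes non-admissible or gets fewer than $\ell(p_a)$ available neighbors, the weight $M^*$ placed on $p_a$ can be bounded in terms of what was available. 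The cleanest route is: the weight $\sum_{a \in A'} w(a,p')$ chosen greedily dominates $\sum_{a \in A' \cap \Gamma(p)} w(a,p)$ for the contemporaneously-admissible post $p$, for every such $p$; summing the optimum's contribution over posts ``killed'' or depleted by this step gives a factor $|P|$; summing over applicants gives a factor $|A|$ (each of the $|A'| \le u(p')+1$... in fact $|A'|\le u_{\max}$ greedy applicants can block at most... ) — here I must be careful to get $u_{\max}+1$ rather than $u_{\max}$. For the $u_{\max}+1$ bound I would charge, to the greedy step opening $(p',A')$, both the $\OPT$-weight of the $|A'| \le u_{\max}$ applicants in $A'$ (each dominated edgewise by the corresponding greedy edge — but an applicant's $\OPT$-post may differ; this is where domination of the chosen pair over all admissible alternatives is used) and the $\OPT$-weight placed on $p'$ itself in $M^*$; a counting argument then yields total charge $\le (u_{\max}+1) \cdot G$.

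\textbf{The $\sqrt{|A|}+1$ bound for unit weights.} For \textsc{mlq}, I split posts in the optimal solution into those with $u(p) \le \sqrt{|A|}$ and those with $u(p) > \sqrt{|A|}$. The first class is handled by the $u_{\max}+1$-type argument restricted to small-quota posts, contributing at most $(\sqrt{|A|}+1)\cdot G$. For the second class, there can be at most $\sqrt{|A|}$ such posts in $M^*$ simultaneously since each consumes more than $\sqrt{|A|}$ distinct applicants and $|A|$ applicants are available; by the $|P|$-type argument each such post's contribution is dominated by $G$, so this class contributes at most $\sqrt{|A|}\cdot G$. Balancing gives $\sqrt{|A|}+1$; I expect the main subtlety is making the two charging schemes compatible so that no greedy weight is overcharged, which I would resolve by charging the two post classes to disjoint ``budgets'' or simply taking the worse of the two bounds on $\OPT$.

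\textbf{Running time.} I would maintain, for each still-admissible post $p$, its best currently assignable set: with unit... no, with arbitrary weights, the best set of size at most $u(p)$ from $\Gamma(p)$ is just the $u(p)$ heaviest available incident edges, so it can be read off a sorted incidence list. Sort all edges once in $O(|E|\log|E|)$; maintain a priority queue over posts keyed by current best assignable weight. Each time a post opens, its $\le u_{\max}$ applicants are deleted, and each deleted applicant $a$ forces an update only at the posts in $\Gamma(a)$, each update being $O(\log|E|)$ via the queue and pointer-advancing in the sorted lists; the total update work telescopes to $O(|E|\log|E|)$ since each edge is touched a constant number of times. I would remark that the admissibility check (dropping posts with too few remaining neighbors) is absorbed into the same bookkeeping.

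\textbf{Main obstacle.} The crux is the charging argument for the $u_{\max}+1$ factor: one must account for the fact that an applicant's optimal post need not be the post that ``uses up'' its availability in the greedy run, and that a greedy step can simultaneously depress the assignable weight at many posts. I expect to handle this by a careful per-iteration invariant — at the time post $p'$ is opened, the greedy weight $w(p',A')$ is at least the \emph{current} optimal-restricted assignable weight at every admissible post — and then amortizing the loss across the $\le u_{\max}$ applicants removed plus the post itself.
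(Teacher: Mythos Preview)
Your charging framework for the $|P|$, $|A|$, and $u_{\max}+1$ bounds is essentially what the paper does, though the paper packages it more cleanly. For $|P|$ and $|A|$ the paper simply observes that the \emph{first} greedy pair $(p'_1,A'_1)$ already dominates every optimal pair $(p_i,A_i)$ and every single optimal edge, so the greedy output is at least $\OPT/|P|$ and $\OPT/|A|$; no per-iteration bookkeeping is needed. For $u_{\max}+1$ the paper maps each optimal post $p_i$ to the first greedy iteration $\pi(i)$ that either opens $p_i$ or removes one of the applicants in $A_i$; the counting fact is that at most $|A'_j|+1\le u_{\max}+1$ optimal posts can have $\pi(i)=j$ (one for each applicant in $A'_j$, plus possibly $p'_j$ itself). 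That is the clean source of the ``$+1$'' you were reaching for; your phrasing in terms of ``the $\OPT$-weight of the applicants in $A'$'' is the same idea at the post level but reads as if you intend an edgewise bound, which would not hold.

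The genuine gap is your $\sqrt{|A|}+1$ argument. Splitting optimal posts by $u(p)$ does not work: a post with $u(p)>\sqrt{|A|}$ may have small $\ell(p)$ and be assigned only a few applicants in $M^*$, so there can be far more than $\sqrt{|A|}$ such posts. Even if you repair this by splitting on $|A_i|$ instead, the ``small'' side still fails, because the factor $|A'_j|+1$ in the charging comes from the \emph{greedy} iteration's size, not from the optimal posts' sizes; restricting to small optimal posts does not shrink that factor. The paper's route is different and simpler: let $|A'|=\sum_j|A'_j|$ be the greedy output (which equals its weight in \textsc{mlq}). If $|A'|\ge\sqrt{|A|}$ then trivially $\OPT\le|A|\le\sqrt{|A|}\,|A'|$. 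Otherwise the inequality already established for the $u_{\max}+1$ bound gives
\[
\OPT\ \le\ \sum_j(|A'_j|+1)|A'_j|\ \le\ (|A'|+1)\,|A'|\ \le\ (\sqrt{|A|}+1)\,|A'|.
\]
No decomposition of the optimal solution is needed.

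Your running-time sketch matches the paper's.
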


\begin{proof}
Let $p'_1, \dots, p'_{n}$ be the posts chosen by the algorithm and let $A'_1, \dots, A'_{n}$ be the corresponding sets of applicants. Furthermore, consider an optimal solution of weight~$\OPT$, consisting of open posts $p_1, \dots, p_k$ and the corresponding sets of applicants $A_1, \dots, A_k$ assigned to those posts.

We first observe that the first two approximation ratios of $|P|$ and $|A|$ are already achieved by the initial selection of $p'_1$ and $A'_1$ chosen in the first round of the algorithm. For every~\mbox{$i \in \{1, \dots, k\}$}, post $p_i$ is an admissible post in the first iteration of the algorithm. The first iteration's choice of the pair~$(p'_1, A'_1)$ implies $\sum_{a \in A'_1} w(a, p'_1) \geq \sum_{a \in A_i} w(a, p_i) \geq w(a', p_i)$ for every $a' \in A_i$. As the optimal solution opens at most $|P|$ posts and serves at most $|A|$ applicants, we deduce that~$\min \{|P|, |A|\} \cdot \sum_{a \in A'_1} w(a, p'_1) \geq  \OPT$.

We now turn our attention to the remaining approximation guarantees, which are $u_{\max} + 1$ for \textsc{wmlq} and $\sqrt{|A|} + 1$ for \textsc{mlq}. For every $i \in \{1, \dots, k\}$, let $\pi(i)$ denote the first iteration of the algorithm such that $A'_{\pi(i)} \cap A_i \neq \emptyset$ or $p'_{\pi(i)} = p_i$. This is the first iteration in which post $p_i$ is opened or an applicant assigned to it in the optimal solution becomes assigned. Note that such an iteration exists, because $p_i$ is not admissible after the termination of the algorithm. Furthermore, observe that $\sum_{a \in A'_{\pi(i)}} w(a, p'_{\pi(i)}) \geq \sum_{a \in A_i} w(a, p_i)$, because the pair $(p_i, A_i)$ was a valid choice for the algorithm in iteration $\pi(i)$. Now for iteration $j$ define $P_j := \{i \, : \, \pi(i) = j\}$ and observe that $|P_j| \leq |A'_j| + 1$, because $P_j$ can only contain one index~$i'$ with $p_{i'} = p'_j$, and all other $i \in P_j \setminus \{i'\}$ must have $A_i \cap A'_j \neq \emptyset$ (where the sets $A_i$ are disjoint). We conclude that
\begin{align*}
\OPT & \ = \ \sum_{i = 1}^k \sum_{a \in A_i} w(a, p_i) \ \leq \ \sum_{i = 1}^k \sum_{a \in A'_{\pi(i)}} w(a, p'_{\pi(i)}) \\
& \ \leq \ \sum_{j = 1}^{n} |P_j| \sum_{a \in A'_j} w(a, p'_j) \ \leq \ \sum_{j = 1}^{n} (|A'_j| + 1) \sum_{a \in A'_j} w(a, p'_j).
\end{align*}

Note that $|A'_j| \leq u_{\max}$ and therefore \[\OPT \leq (u_{\max} + 1) \sum_{j = 1}^{n} \sum_{a \in A'_j} w(a, p'_j),\] proving the third approximation guarantee. Now consider the unit-weight \textsc{mlq} case and define $A' = \bigcup_{j = 1}^{n} A'_j$. If $|A'| \geq \sqrt{|A|}$, then $\sqrt{|A|}|A'| \geq |A| \geq \OPT$. Therefore assume $|A'| < \sqrt{|A|}$.
Note that in this case, the above inequalities imply $\OPT \leq (|A'| + 1)|A'| \leq (\sqrt{|A|} + 1)|A'|$, proving the improved approximation guarantee for \textsc{mlq}.

We now turn to proving the bound on the running time. We will describe how to implement the search for the greedy choice of the pair $(p', A')$ in each iteration efficiently using a heap data structure. Initially, for every post $p$, we sort the applicants in its neighborhood by non-increasing order of $w(a, p)$. This takes time at most $O(|E| \log |E|)$ as the total number of entries to sort is $\sum_{p \in P} |\Gamma(p)| = |E|$. We then introduce a heap containing all admissible posts, and associate with each post $p$ the total weight of the first $u(p)$ edges in its neighborhood list. 
Note that these entries can be easily kept up to date whenever the algorithm opens a post and assigns applicants to it: In the list of every other post $p$ we simply replace the assigned applicants with the first not-yet-assigned entry in the list (or we remove the post if less than $\ell(p)$ applicants are available). As every edge in the graph can only trigger one such replacement, only $O(|E|)$ updates can occur and each of these requires $O(\log |P|)$ time for reinserting the post at the proper place in the heap. Now, in each iteration of the algorithm, the optimal pair $(p', A')$ can be found by retrieving the maximum element from the heap. This happens at most $|P|$ times and requires $O(\log |P|)$ time in each step. \qed
\end{proof}

\begin{ex}
The following two examples show that our analysis of the greedy algorithm is (asymptotically) tight for each of the described approximation factors.

\begin{itemize}
  \item[(a)] The bounds $|P|$ and $u_{\max} + 1$ are tight, and $\sqrt{|A|} + 1$ is asymptotically tight:
  
  Consider an instance of \textsc{mlq} with $k + 1$ posts $p_0, \dots, p_k$ and $k(k + 1)$ applicants $a_{0,1}, \dots, a_{0,k}, a_{1,1}, \dots, a_{k,k}$. Let $\ell(p_i) = u(p_i) = k$ for $i \in \{0, \dots, k\}$. Each applicant $a_{i,j}$ applies to post $i$, and if $i>0$, additionally to post $0$.
   For the greedy algorithm, opening post $p_0$ and assigning applicants $a_{1,1}, \dots, a_{k,k}$ to it is a valid choice in its first iteration, after which no further posts are admissible. Thus, it only assigns $k$ applicants in total. The optimal solution, however, can assign all $k(k+1)$ applicants by assigning applicants $a_{i,1}, \dots, a_{i,k}$ to $p_i$ for each $i$. Therefore, the greedy algorithm cannot achieve an approximation factor better than $k + 1$ on this family of instances, for which $|P| = k+1$, $\sqrt{|A|} < k+1$, and $u_{\max} = k$.
   
  \item[(b)] The bound $|A|$ is tight:
  
  To see that the approximation ratio of $|A|$ is  tight for \textsc{wmlq} consider the following instance with $k$ posts $p_1, \dots, p_k$ and $k$ applicants $a_1, \dots, a_k$. Let $\ell(p_i) = 0$ and $u(p_i) = k$ for every $i$. Every applicant applies for every post, and $w(a_i, p_i) = 1$ for every $i$ but $w(a_i, p_j) = \varepsilon$ for every $j \neq i$ for some arbitrarily small $\varepsilon > 0$. In its first iteration, the greedy algorithm might choose to open post $p_1$ and assign all applicants to it. This solution accumulates a weight of $1 + (k - 1)\varepsilon$, while the weight of the optimal solution is $k = |A|$.
  \end{itemize}
\end{ex}

\begin{restatable}{theorem}{restategreedyneg}
\label{th:inappr}
	\textsc{mlq} is not approximable within a factor of $|P|^{1-\varepsilon}$ or $\sqrt{|A|}^{1-\varepsilon}$ or~$u_{\max}^{1-\varepsilon}$ for any~$\varepsilon > 0$, unless $\P = \NP$, even when restricting to instances where $\ell(p) = u(p)$ for every $p \in P$ and $|\Gamma(a)| \leq 2$ for every $a \in A$.
\end{restatable}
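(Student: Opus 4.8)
The plan is to reduce from \textsc{mis} on cubic graphs, reusing and extending the construction from \cref{th:max_spa_np}. Recall that reduction: given a cubic graph $G$ with $n$ vertices and $m$ edges, we create a post of lower and upper quota $3$ for each vertex and an applicant of degree $2$ for each edge, so that assignments of size $3K$ correspond exactly to independent sets of size $K$. This already gives an instance of \textsc{mlq} with $\ell(p) = u(p)$ for all $p$ and $|\Gamma(a)| \le 2$ for all $a$, so the only thing missing is to blow up the gap to be polynomial in the relevant size parameter. Since \textsc{mis} is not approximable within $n^{1-\varepsilon'}$ unless $\P = \NP$ (here $n$ is the number of vertices, and for this stronger bound we must use general graphs rather than cubic ones — so I would first check that a bounded-degree gadget can replace a general-graph vertex, or alternatively run the amplification below directly on the cubic-graph $\APX$-hardness combined with a product/boosting argument).

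The cleanest route is a \textbf{disjoint-union (direct sum) amplification}. Take the \textsc{mis} instance $G$ and form $t$ disjoint copies $G^{(1)}, \dots, G^{(t)}$; the maximum independent set of the union is $t \cdot \alpha(G)$, and this union is still a graph to which we can apply the gadget. Running the \cref{th:max_spa_np}-style construction on the $t$-fold union yields an \textsc{mlq} instance $\mathcal{I}'$ in which $|P| = tn$, $|A| = tm$, and $u_{\max} = \Delta$ (the max degree; $3$ if we stick with cubic graphs, but then $u_{\max}$ is constant and gives nothing — so for the $u_{\max}^{1-\varepsilon}$ bound I instead need the gadget to produce posts whose common quota grows, see next paragraph). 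The key point: an $\alpha$-approximation to \textsc{mlq} on $\mathcal{I}'$ translates into an $\alpha$-approximation to \textsc{mis} on the union and hence on $G$ (a solution to the union restricts to a solution on the best single copy of value at least $\frac{1}{t}$ of the union's value, which is at least $\alpha(G)$ when the union's approximate value is at least $\frac{1}{\alpha} t\alpha(G)$). Choosing $t$ polynomially large relative to $n$ makes $|P| = tn$ and $|A| = tm$ polynomially related to $n$, so the $n^{1-\varepsilon'}$ hardness of \textsc{mis} transfers to $|P|^{1-\varepsilon}$ and $\sqrt{|A|}^{1-\varepsilon}$ hardness of \textsc{mlq} after adjusting the exponent (note $|A| = tm = \Theta(tn)$ for cubic-ish graphs, so $\sqrt{|A|}$ and $|P|$ are polynomially comparable, which is why the same base instance handles both).

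For the $u_{\max}^{1-\varepsilon}$ bound the amplification must instead \emph{increase the quota}: I would replace each vertex-post (quota $3$) by a post of quota $q$ that collects a bundle of $q$ "private" applicants plus the edge-applicants, wired so that opening the post still forces the independent-set semantics but now $u_{\max} = q$ can be taken polynomially large in $n$ while $|P|, |A|$ stay polynomial. Concretely, give each vertex $v$ a post $p_v$ with $\ell(p_v) = u(p_v) = q$ and attach $q - \deg(v)$ pendant applicants of degree $1$ to it, keeping the edge-applicants of degree $2$ as before; then $p_v$ can be opened iff all its edge-applicants are free, i.e. iff $v$ is in the independent set, and each open post contributes exactly $q$ to the size. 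With $q$ chosen as a suitable polynomial in $n$, the \textsc{mis} gap of $n^{1-\varepsilon'}$ becomes a $u_{\max}^{1-\varepsilon}$ gap for \textsc{mlq}. The main obstacle I anticipate is bookkeeping the three size parameters simultaneously so that a single family of instances (or a small number of families) delivers all three inapproximability bounds with matching exponents, and making sure the pendant-applicant gadget does not accidentally create a feasible way to open a post without the independent-set condition — i.e., verifying the "exactly one of the two endpoint-posts can be open" type of argument still goes through cleanly when quotas are inflated.
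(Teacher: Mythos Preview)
Your pendant-attachment gadget for inflating $u_{\max}$ is the correct idea and, applied directly to \emph{general} \textsc{mis} with $q = n$, is precisely the paper's construction: each vertex $v_i$ becomes a post $p_i$ with $\ell(p_i) = u(p_i) = n$, adjacent to $n - \deg(v_i)$ private degree-$1$ applicants and $\deg(v_i)$ degree-$2$ edge-applicants (one shared with each neighbour). Opening $p_i$ then forces all $n$ of its applicants to be assigned to it, so two posts corresponding to adjacent vertices cannot both be open, and $\OPT_{\textsc{mlq}} = n \cdot \OPT_{\textsc{mis}}$. Since in this instance $|P| = u_{\max} = n$ and $|A| = n^2 - m \leq n^2$, Zuckerman's $n^{1-\varepsilon}$ inapproximability of general \textsc{mis} immediately yields all three bounds from a single family of instances---which also resolves the ``bookkeeping'' worry you raise at the end.

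The part of your plan that does not work is the disjoint-union amplification. Taking $t$ copies of a graph multiplies both $\alpha(G)$ and the vertex count by $t$, so the \emph{ratio} between the optimum and what any approximation algorithm can guarantee is unchanged: a $c$-approximation on each copy gives a $c$-approximation on the union and vice versa. Starting from cubic \textsc{mis}, which is only $\APX$-hard (constant-factor inapproximable), no number of disjoint copies will ever manufacture a $|P|^{1-\varepsilon}$ gap. Genuine gap-amplification techniques (graph products and the like) are far more delicate, would not obviously preserve the $|\Gamma(a)| \leq 2$ constraint, and are simply unnecessary here. Drop the cubic starting point and the amplification entirely and go straight to your pendant gadget on arbitrary graphs; that is the whole proof.
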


\begin{proof}
	Once again we use the maximum independent vertex set problem. Given an instance of \textsc{mis} on a graph $G = (V, E)$ with $|V| = n$ and $|E| = m$, we create an \textsc{mlq} instance with $n$ posts $p_1, \dots, p_n$, post $p_i$ corresponding to vertex $v_i$. We also introduce $n^2 - m$ applicants as follows. Initially, we introduce $n$ applicants $a_{i,1}, a_{i,2}, ..., a_{i,n}$ applying for each post $p_i$. Then, for every edge $\{v_i, v_j\} \in E$, we merge the applicants $a_{i,j}$ and $a_{j,i}$, obtaining a single applicant applying for both $p_i$ and $p_j$. Furthermore, we set $\ell(p_j) = u(p_j) = n$ for every post. This construction is shown in \cref{fi:theorem9}.

\begin{figure}[htbp]
 \centering
\begin{tikzpicture}
\tikzstyle{post} = [circle, draw=black, fill=black, scale= 0.8]
\tikzstyle{applicant} = [circle, draw=black, fill=white, scale= 0.8, align=center]
\begin{scope}[xshift=-3cm, yshift=8cm]
 \node[applicant]  (v1) at (0:1) {$v_1$};
 \node[applicant]  (v2) at (72:1) {$v_2$};
 \node[applicant]  (v3) at (144:1) {$v_3$};
 \node[applicant]  (v4) at (216:1) {$v_4$};
 \node[applicant]  (v5) at (288:1) {$v_5$};
 \draw[very thick] (v1) -- (v2) -- (v3) -- (v4)-- (v5) -- (v1);
\end{scope}

\begin{scope}[xshift=0cm, yshift=8.25cm]
 \node[post]  (pc) at   (0, 0) {};
 \node[anchor=west] (text3) at (pc.east) {\begin{minipage}{4.25cm} post vertex $p_{i}$ with lower quota and upper quota $n$\end{minipage}}; 
 \node[applicant]  (a) at                            (0, -0.75) {};
 \node[anchor=west] (text2) at (a.east) {\begin{minipage}{4.25cm} applicant vertex \end{minipage}};  
 \draw[thick] ($(pc.north west)+(-0.2,0.4)$) rectangle ($(a.south -| text.east) + (0.2,-0.2)$);
\end{scope}

\begin{scope}[xshift=0cm, yshift=1.7cm, every node/.style={anchor=center, text depth=.4ex,text height=2ex,text width=1.3em}, scale=0.8]
 \matrix (A) [matrix of math nodes, nodes = {draw,circle}, column 1/.style={nodes={fill=black,text=white,align=center}},row sep=10mm, column sep=6 mm] 
  { p_1 & a_{1,1} & a_{1,2} & a_{1,3} & a_{1,4} & a_{1,5}\\
    p_2 &         & a_{2,2} & a_{2,3} & a_{2,4} & a_{2,5}\\
    p_3 & a_{3,1} &         & a_{3,3} & a_{3,4} & a_{3,5}\\
    p_4 & a_{4,1} & a_{4,2} &         & a_{4,4} & a_{4,5}\\
    p_5 &         & a_{5,2} & a_{5,3} &         & a_{5,5}\\
  };
	
  \draw[thick] (A-1-1) -- (A-1-2);
  \draw[thick] (A-1-1.45) -- ++ (0.50,0.4) -| (A-1-3);
  \draw[thick] (A-1-1.60) -- ++ (0.33,0.6) -| (A-1-4);
  \draw[thick] (A-1-1.75) -- ++ (0.16,0.8) -| (A-1-5);
  \draw[thick] (A-1-1.90) -- ++ (0.00,1.0) -| (A-1-6);
  \draw[thick] (A-2-1) -- (A-2-1 -| A-1-2) -- (A-1-3);
  \draw[thick] (A-2-1.45) -- ++ (0.50,0.4) -| (A-2-3);
  \draw[thick] (A-2-1.60) -- ++ (0.33,0.6) -| (A-2-4);
  \draw[thick] (A-2-1.75) -- ++ (0.16,0.8) -| (A-2-5);
  \draw[thick] (A-2-1.90) -- ++ (0.00,1) -| (A-2-6);
  \draw[thick] (A-3-1) -- (A-3-2); \coordinate (X) at ($(A-3-1.45)+(0.5,0.4)$); \coordinate (Y) at (X -| A-2-3);
  \draw[thick] (A-3-1.45) -- ++ (0.50,0.4) -- (Y) -- (A-2-4);
  \draw[thick] (A-3-1.60) -- ++ (0.33,0.6) -| (A-3-4);
  \draw[thick] (A-3-1.75) -- ++ (0.16,0.8) -| (A-3-5);
  \draw[thick] (A-3-1.90) -- ++ (0.00,1) -| (A-3-6);  
  \draw[thick] (A-4-1) -- (A-4-2); \coordinate (X) at ($(A-4-1.60)+(0.33,0.6)$); \coordinate (Y) at (X -| A-3-4);
  \draw[thick] (A-4-1.45) -- ++ (0.50,0.4) -| (A-4-3);
  \draw[thick] (A-4-1.60) -- ++ (0.33,0.6) -- (Y) -- (A-3-5);
  \draw[thick] (A-4-1.75) -- ++ (0.16,0.8) -| (A-4-5);
  \draw[thick] (A-4-1.90) -- ++ (0.00,1) -| (A-4-6);  
  \draw[thick] (A-5-1) -- (A-5-1 -| A-1-2) -- ++(0,-0.7) -- ++(8.5,0) |- (A-1-6);
  \coordinate (X) at ($(A-5-1.75)+(0.16,0.8)$); \coordinate (Y) at (X -| A-4-5);
  \draw[thick] (A-5-1.45) -- ++ (0.50,0.4) -| (A-5-3);
  \draw[thick] (A-5-1.60) -- ++ (0.33,0.6) -| (A-5-4);
  \draw[thick] (A-5-1.75) -- ++ (0.16,0.8) -- (Y) -- (A-4-6);
  \draw[thick] (A-5-1.90) -- ++ (0.00,1) -| (A-5-6);  
  \end{scope}
\end{tikzpicture}
\caption{The transformation of the \textsc{mis} instance in the upper left corner to a \textsc{mlq} instance.}
\label{fi:theorem9}
\end{figure}
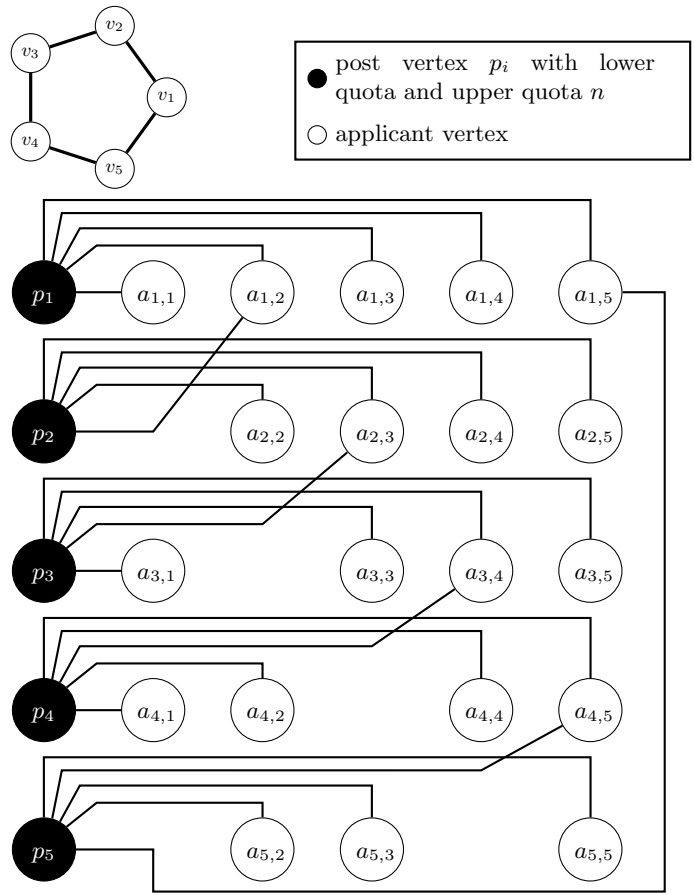

	Note that due to the choice of upper and lower bounds, any open post must be assigned to all the applicants in its neighborhood. Thus, a solution to the \textsc{wmlq} instance is feasible if and only if $\Gamma(p_i) \cap \Gamma(p_j) = \emptyset$ for all open posts $p_i$ and $p_j$ with $i \neq j$, which is equivalent to $v_i$ and $v_j$ not being adjacent in $G$ by construction of the instance. Therefore, the \textsc{mlq} instance has a feasible solution opening $k$ posts (and thus serving $kn$ applicants) if and only if there is an independent set of size $k$ in~$G$. We conclude that $\OPT_{\textsc{mlq}} = n \cdot \OPT_{\textsc{mis}}$ for the two instances under consideration.
	
	Note that in the constructed \textsc{mlq} instance, $n =|P| = u_{\max} \geq \sqrt{|A|}$. Therefore any approximation algorithm with a factor better than $|P|^{1-\varepsilon}$ or $\sqrt{|A|}^{1-\varepsilon}$ or $u_{\max}^{1-\varepsilon}$ for $\varepsilon > 0$ yields a solution of the instance that serves at least~$(1/n^{1-\varepsilon}) \OPT_{\textsc{mlq}}$ applicants and therefore opens at least $(1/n^{2-\varepsilon}) \OPT_{\textsc{mlq}} = (1/n^{1 - \varepsilon}) \OPT_{\textsc{mis}}$ posts, corresponding to an independent set of the same size. By~\cite{Zuc07}, this implies $\P = \NP$. \qed
\end{proof}

\section{Matchings with lower quotas in general graphs}
\label{sec:many}

Throughout this paper, we focused on many-to-one matchings in bipartite graphs because these fit most applications in the centralized formation of groups that motivated our investigation. A straightforward generalization of {\sc wmlq} to matchings in an arbitrary (not necessarily bipartite) graph $G$ allows \textit{all} vertices of the graph to have lower and upper quotas. 

\begin{pr} \textsc{gwmlq}\ \\
	Input: $\mathcal{I} = (G, w, \ell, u)$; a not necessarily bipartite graph $G = (V, E)$ with edge weights $w$, lower quotas $\ell$ and upper quotas $u$.\\
Task: Find an assignment of maximum weight.\\
If $w(e)=1$ for all $e \in E$, we refer to the problem as \textsc{gmlq}.
\end{pr}

One can see this generalization as a variant of the $D$-matching problem (see \cref{sec:rWork}), where each vertex has a domain consisting of 0 and an interval. Clearly, the hardness results derived in the previous sections are valid for {\sc gwmlq} as well. We now briefly argue that the positive results from Sections~\ref{sec:com_rest} and \ref{sec:bounded_tw} carry over to this generalized setting. However, our approximation results do not hold even if $G$ is bipartite and only a single applicant is equipped with lower and upper quotas. In fact, {\sc gwmlq} does not allow for any approximation even in this very restricted case unless $\P=\NP$.
\smallskip

The two positive results in \cref{sec:com_rest}, namely Theorems~\ref{th:infty_2} and \ref{th:u2}, are applicable to {\sc gwmlq}. Note that \cref{th:infty_2} (bounded degree for all posts) is a special case of \cref{th:u2} (bounded upper quota for all posts).

\begin{theorem}
	 {\sc gwmlq} can be solved in polynomial time when restricted to instances with $u(v) \leq 2$ for all $v \in V$.
\end{theorem}

\begin{proof}
We will work with the proof of \cref{th:u2}, which requires some simple modifications to fit the case of arbitrary graphs. All we need to do is to add a dummy vertex $v_d$ to $G$ -- this resembles dummy post $p_d$ in the proof of \cref{th:u2}. The steps corresponding to a post vertex should now be executed for all vertices of the graph. We can assume there are no vertices with lower quota~0 and upper quota~2 by a similar reasoning given in \cref{th:infty_2}. For every vertex $v_i$ with $ \ell(v_i) = 2$, we add two dummy vertices $q_i^1$ and $q_i^2$ and connect them to each other and~$v_i$.  Then, the dummy vertex $v_d$ is connected to vertices with upper quota~1. We finish the construction by adding triangles to $v_d$ to ensure that only two $f$-factors need to be computed. The arguments in the proof of \cref{th:u2} can now be applied to this $f$-factor instance. \qed
\end{proof}

As for \cref{th:bounded_tw}, the algorithm for bounded treewidth and upper quota carries over to {\sc gwmlq} without any modification. Note that in the proof we never used the bipartiteness of $G$ or that $u(a) =1$ for the applicants.

\begin{theorem}
	  {\sc gwmlq} can be solved in time $O(T + (u_{\max})^{3\tw(G)}|E|)$, where $T$ is the time needed for computing a tree decomposition of $G$ of width $\tw(G)$. In particular, {\sc gwmlq} can be solved in polynomial time when restricted to instances of bounded treewidth, and {\sc wmlq} parameterized by $\max\{\tw(G), u_{\max}\}$ is fixed-parameter tractable.
\end{theorem}

	Finally, we prove that \cref{al:greedy} cannot be generalized even for bipartite {\sc mlq} with lower and upper quotas on both sides.

\begin{theorem}
	It is $\NP$-hard to decide whether $\OPT > 0$ for an instance of {\sc gmlq}, even if the graph is bipartite and on one side of the bipartition all vertices except for one have unitary upper and lower quota.
\end{theorem}

\begin{proof}
	To every instance of {\sc mis} we construct an instance of {\sc gmlq} so that the {\sc mis} instance admits an independent set of size $K$ if and only if $\OPT > 0$ for the {\sc gmlq} instance. We start with the same {\sc mlq} instance that was constructed from an {\sc mis} instance in the proof of \cref{th:inappr}. The changes are depicted in \cref{fi:theorem11}. A dummy applicant $a_d$ is added to the graph and connected to all posts. We set $\ell(a_d) = u(a_d) = K$ and change $\ell(p) = u(p)$ to $n + 1$ for every post $p \in P$.
	
		\begin{figure}[htbp]
 \centering
\begin{tikzpicture}
\tikzstyle{post} = [circle, draw=black, fill=black, scale= 0.7]
\tikzstyle{applicant} = [circle, draw=black, fill=white, scale= 0.7, align=center]
\begin{scope}[xshift=-3cm, yshift=8cm]
 \node[applicant]  (v1) at (0:1) {$v_1$};
 \node[applicant]  (v2) at (72:1) {$v_2$};
 \node[applicant]  (v3) at (144:1) {$v_3$};
 \node[applicant]  (v4) at (216:1) {$v_4$};
 \node[applicant]  (v5) at (288:1) {$v_5$};
 \draw[very thick] (v1) -- (v2) -- (v3) -- (v4)-- (v5) -- (v1);
\end{scope}

\begin{scope}[xshift=0cm, yshift=8.65cm]
 \node[post]  (pc) at   (0, 0) {};
 \node[anchor=west] (text3) at (pc.east) {\begin{minipage}{4.25cm} post vertex $p_{i}$ with lower quota and upper quota $n + 1$\end{minipage}}; 
 \node[applicant]  (a) at                            (0, -0.75) {};
 \node[anchor=west] (text2) at (a.east) {\begin{minipage}{4.25cm} applicant vertex \end{minipage}};  
	\node[applicant,rectangle]  (a) at                            (0, -1.5) {};
 \node[anchor=west] (text1) at (a.east) {\begin{minipage}{4.25cm} dummy applicant vertex with lower quota and upper quota $K$\end{minipage}};  
 \draw[thick] ($(pc.north west)+(-0.2,0.4)$) rectangle ($(a.south -| text.east) + (0.2,-0.4)$);
\end{scope}

\begin{scope}[xshift=0cm, yshift=1.2cm, every node/.style={anchor=center, text depth=.4ex,text height=2ex,text width=1.3em}, scale=0.8]
 \matrix (A) [matrix of math nodes, nodes = {draw,circle}, column 1/.style={nodes={fill=black,text=white,align=center}},row sep=10mm, column sep=6 mm] 
  { p_1 & a_{1,1} & a_{1,2} & a_{1,3} & a_{1,4} & a_{1,5}\\
    p_2 &         & a_{2,2} & a_{2,3} & a_{2,4} & a_{2,5}\\
    p_3 & a_{3,1} &         & a_{3,3} & a_{3,4} & a_{3,5}\\
    p_4 & a_{4,1} & a_{4,2} &         & a_{4,4} & a_{4,5}\\
    p_5 &         & a_{5,2} & a_{5,3} &         & a_{5,5}\\
  };
	 \node[applicant, rectangle, text height=2ex,text width=1.4em,inner sep=3mm] (a_d) at ($(A-3-1) + (-1.5, 0)$) {$a_d$};
	
	\draw[thick] (a_d) |- (A-1-1);
	\draw[thick] (a_d) -- (A-2-1);
	\draw[thick] (a_d) -- (A-3-1);
	\draw[thick] (a_d) -- (A-4-1);
	\draw[thick] (a_d) |- (A-5-1);
  \draw[thick] (A-1-1) -- (A-1-2);
  \draw[thick] (A-1-1.45) -- ++ (0.50,0.4) -| (A-1-3);
  \draw[thick] (A-1-1.60) -- ++ (0.33,0.6) -| (A-1-4);
  \draw[thick] (A-1-1.75) -- ++ (0.16,0.8) -| (A-1-5);
  \draw[thick] (A-1-1.90) -- ++ (0.00,1.0) -| (A-1-6);
  \draw[thick] (A-2-1) -- (A-2-1 -| A-1-2) -- (A-1-3);
  \draw[thick] (A-2-1.45) -- ++ (0.50,0.4) -| (A-2-3);
  \draw[thick] (A-2-1.60) -- ++ (0.33,0.6) -| (A-2-4);
  \draw[thick] (A-2-1.75) -- ++ (0.16,0.8) -| (A-2-5);
  \draw[thick] (A-2-1.90) -- ++ (0.00,1) -| (A-2-6);
  \draw[thick] (A-3-1) -- (A-3-2); \coordinate (X) at ($(A-3-1.45)+(0.5,0.4)$); \coordinate (Y) at (X -| A-2-3);
  \draw[thick] (A-3-1.45) -- ++ (0.50,0.4) -- (Y) -- (A-2-4);
  \draw[thick] (A-3-1.60) -- ++ (0.33,0.6) -| (A-3-4);
  \draw[thick] (A-3-1.75) -- ++ (0.16,0.8) -| (A-3-5);
  \draw[thick] (A-3-1.90) -- ++ (0.00,1) -| (A-3-6);  
  \draw[thick] (A-4-1) -- (A-4-2); \coordinate (X) at ($(A-4-1.60)+(0.33,0.6)$); \coordinate (Y) at (X -| A-3-4);
  \draw[thick] (A-4-1.45) -- ++ (0.50,0.4) -| (A-4-3);
  \draw[thick] (A-4-1.60) -- ++ (0.33,0.6) -- (Y) -- (A-3-5);
  \draw[thick] (A-4-1.75) -- ++ (0.16,0.8) -| (A-4-5);
  \draw[thick] (A-4-1.90) -- ++ (0.00,1) -| (A-4-6);  
  \draw[thick] (A-5-1) -- (A-5-1 -| A-1-2) -- ++(0,-0.7) -- ++(8.5,0) |- (A-1-6);
  \coordinate (X) at ($(A-5-1.75)+(0.16,0.8)$); \coordinate (Y) at (X -| A-4-5);
  \draw[thick] (A-5-1.45) -- ++ (0.50,0.4) -| (A-5-3);
  \draw[thick] (A-5-1.60) -- ++ (0.33,0.6) -| (A-5-4);
  \draw[thick] (A-5-1.75) -- ++ (0.16,0.8) -- (Y) -- (A-4-6);
  \draw[thick] (A-5-1.90) -- ++ (0.00,1) -| (A-5-6);  
  \end{scope}
\end{tikzpicture}
\caption{The transformation of the \textsc{mis} instance in the upper left corner to a generalized \textsc{mlq} instance.}
\label{fi:theorem11}
\end{figure}
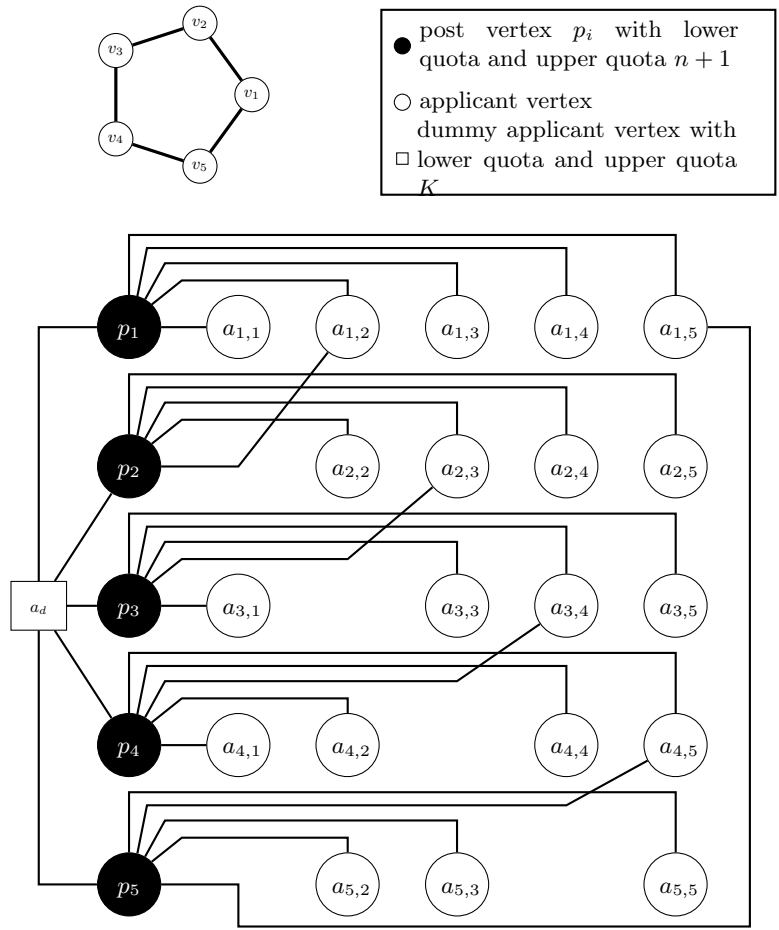
    
    Since every post is adjacent to exactly $n + 1$ applicants, opening a post requires allocating all its applicants to it, including $a_d$ as well. Thus, opening any post implies allocating $a_d$ to exactly $K$ posts. These $K$ open posts do not share applicants other than $a_d$, which is equivalent to the $K$ vertices corresponding to them in the {\sc mis} instance forming an independent set.\qed
\end{proof}

\section{Conclusion}

We discussed the complexity, approximability and fixed-parameter tractability of {\sc wmlq} from various viewpoints such as bounded degree, quota and treewidth.

Further work on the topic might 
 include imposing common quotas on some groups of posts. That is, we may have subsets $P_1,\dots,P_k$, where for each $i$ ($1\leq i\leq k$), $P_i\subseteq P$, $P_i$ has a \emph{common quota} $u(P_i)\geq 1$, where $u(P_i)\leq \sum_{p\in P_i}u(p)$, and any assignment $M$ must now satisfy the additional property that $\sum_{p\in P_i} |\delta(p)\cap M|\leq u(P_i)$. Common quotas can model constraints such as the limited availability of resources required for certain projects -- for example $P_1$ might correspond to those projects that require access to high-performance computing facilities.

We have seen that {\sc wmlq} as defined in \cref{pr:wmlq} has a natural application in the context of student-project allocation, where the weight on a given edge $(s,p)$ corresponds to the utility of student $s$ being assigned to project~$p$. However in many applications students have ordinal preferences over projects. Cardinal utilities can of course follow from these via the use of Borda scores, so we can obtain {\sc wmlq} as before.  But ordinal preferences themselves allow alternative optimality criteria to be formulated. For example we may optimize on the \emph{profile} of a matching $M$, which is a vector whose $i$th position indicates the number of students who obtain their $i$th-choice project in $M$~\cite{Man13}. A \emph{greedy maximum matching} is a matching whose profile is lexicographically maximum, taken over all maximum cardinality matchings, whilst a \emph{generous maximum matching} is a matching whose reverse profile is lexicographically minimum, taken over all maximum cardinality matchings. There are efficient algorithms to find greedy and generous maximum matchings in the absence of lower quotas~\cite{KIMS15}, but it remains open to extend the positive results in this paper to the setting involving both lower quotas and preferences.

\subsubsection*{Acknowledgements}
We would like to thank Andr\'as Frank and Krist\'of B\'erczi for their observations that led us to \cref{th:u2}.

\bibliographystyle{plain} 
\bibliography{mybib}
\end{document}